\newcommand{\QKD}{QKD\xspace}
\newcommand{\Xbasis}{{\mathtt X}\xspace}
\newcommand{\Zbasis}{{\mathtt Z}\xspace}
\newcommand{\Bbasis}{{\mathtt B}\xspace}
\newcommand{\FK}{FRKL\xspace}
\newtheorem{Def}{Definition}
\newtheorem{Thrm}{Theorem}
\newtheorem{Cor}{Corollary}
\newtheorem{Rem}{Remark}
\DeclareMathOperator{\ess}{ess}
\DeclareMathOperator{\Width}{Width}
\DeclareMathOperator{\BigOh}{O}
\begin{document}
\title{Application Of An Improved Version Of McDiarmid Inequality In
 Finite-Key-Length Decoy-State Quantum Key Distribution}

\author{H. F. Chau}
\thanks{email: \texttt{hfchau@hku.hk}}
\author{K.~C. Joseph Ng}
\affiliation{Department of Physics, University of Hong Kong, Pokfulam Road,
 Hong Kong}
\date{\today}

\begin{abstract}
 In practical decoy-state quantum key distribution, the raw key length is
 finite.
 Thus, deviation of the estimated single photon yield and single photon error
 rate from their respective true values due to finite sample size can
 seriously lower the provably secure key rate $R$.
 Current method to obtain a lower bound of $R$ follows an indirect path by
 first bounding the yields and error rates both conditioned on the type of
 decoy used.
 These bounds are then used to deduce the single photon yield and error rate,
 which in turn are used to calculate a lower bound of the key rate $R$.
 Here we report an improved version of McDiarmid inequality in statistics and
 show how use it to directly compute a lower bound of $R$ via the so-called 
 centering sequence.
 A novelty in this work is the optimization of the bound through the freedom
 of choosing possible centering sequences.
 The provably secure key rate of realistic 100~km long quantum channel
 obtained by our method is at least twice that of the state-of-the-art
 procedure when the raw key length $\ell_\text{raw}$ is $\approx 10^5$ to
 $10^6$.
 In fact, our method can improve the key rate significantly over a wide range
 of raw key length from about $10^5$ to $10^{11}$.
 More importantly, it is achieved by pure theoretical analysis without
 altering the experimental setup or the post-processing method.
 In a boarder context, this work introduces powerful concentration inequality
 techniques in statistics to tackle physics problem beyond straightforward
 statistical data analysis especially when the data are correlated so that
 tools like the central limit theorem are not applicable.
\end{abstract}

\maketitle

\section{Introduction}
\label{Sec:Intro}
 Quantum key distribution (\QKD) enables two trusted parties Alice and Bob to
 share a provably secure secret key by preparing and measuring quantum states
 that are transmitted through a noisy channel controlled by an eavesdropper
 Eve.
 One of the major challenges to make \QKD practical is to increase the number
 of secure bits generated per second~\cite{DLQY16}.
 That is why most \QKD experiments to date use photons as the quantum
 information carriers; and these photons come from phase randomize Poissonian
 distributed sources instead of the much less efficient single photon sources.
 In addition, decoy state method is used to combat Eve's
 photon-number-splitting attack on multiple photon events emitted from the
 Poissonian sources~\cite{Wang05,LMC05}.
 From the theoretical point of view, a more convenient figure of merit is the
 key rate, namely, the number of provably secure secret bits per average
 number of photon pulses prepared by Alice.
 This is because key rate measures the intrinsic performance of a \QKD
 protocol (in other words, the software issue) without taking the frequency of
 the pulse (which is a hardware issue) into account.
 This is analogous to the use of time complexity measure rather than the
 actual runtime to gauge the performance of an algorithm in theoretical
 computer science.

 Surely, provably secure lower bound of key rate $R$ (which we simply call the
 key rate from now on) of a \QKD scheme depends on various photon yields as
 well as error rates of those detected photons to be precisely defined in
 Eqs.~\eqref{E:Q_mu_def} and~\eqref{E:E_mu_def} below.
 The problem is that Alice and Bob can only transmit a finite number of
 photons in practice.
 Consequently, the yield and error rates estimated by any sampling technique
 may differ from their actual values.
 If Alice and Bob ignore these deviations, the actual number of bits of secret
 key they get could be smaller than that computed by the key rate $R$, posing
 a security threat.

 Various key rate formulae which take the above finite-size statistical
 fluctuations into account for a few (decoy-state-based) \QKD schemes had been
 reported in literature.
 For instance, Lim \emph{et al.}~\cite{LCWXZ14} computed the key rates of a
 certain implementation of the BB84 \QKD scheme~\cite{BB84} using three types
 of decoy; recently, Chau~\cite{Chau18} extended it to the case of using more
 than three types of decoys.
 Hayashi and Nakayama investigated the key rate for the BB84
 scheme~\cite{HN14}.
 Br\'{a}dler \emph{et al.} showed the key rate for a qudit-based \QKD
 scheme using up to three mutually unbiased preparation and measurement
 bases~\cite{BMFBB16}.
 And Wang \emph{et al.} proved that errors and fluctuations in the decoy
 photon intensities only have minor errors on the final key
 rate~\cite{XPZYP08}.
 In brief, the provably secure key rate of a \QKD scheme so far is found using
 the following three-step strategy.
 First, the yields $Q_{\Bbasis,\mu_n}$ and error rates $E_{\Bbasis,\mu_n}$
 conditioned on the preparation and measurement basis $\Bbasis$ as well as the
 photon intensity parameter $\mu_n$ used are determined by comparing the
 relevant Bob's measurement outcomes, if any, with Alice's preparation states.
 The second step is to deduce yields and error rates conditioned on the number
 of photons emitted by the source.
 For a phase randomized Poissonian photon source,
\begin{equation}
 Q_{\Bbasis,\mu_n} = \sum_{m=0}^{+\infty} \frac{\mu_n^m Y_{\Bbasis,m}
 \exp(-\mu_n)}{m!}
 \label{E:Q_mu_def}
\end{equation}
 and
\begin{equation}
 Q_{\Bbasis,\mu_n} E_{\Bbasis,\mu_n} = \sum_{m=0}^{+\infty} \frac{\mu_n^m
 Y_{\Bbasis,m} e_{\Bbasis,m} \exp(-\mu_n)}{m!} .
 \label{E:E_mu_def}
\end{equation}
 Here, $\mu_1 > \mu_2 > \cdots > \mu_k \ge 0$ are the photon intensities used
 in the decoy method with $k \ge 2$.
 Moreover, $Y_{\Bbasis,m}$ is the probability of photon detection by Bob given
 that the photon pulse sent by Alice contains $m$~photons and $e_{\Bbasis,m}$
 is the bit error rate for $m$~photon emission events prepared in the
 $\Bbasis$ basis~\cite{Wang05,LMC05,MQZL05}.
 The key rate $R$ depends on $Y_{\Bbasis,0}$, $Y_{\Bbasis,1}$ and
 $e_{\Bbasis,1}$~\cite{LCWXZ14,Wang05,LMC05,MQZL05}.
 Nevertheless, the later quantities cannot be determined precisely because
 Eqs.~\eqref{E:Q_mu_def} and~\eqref{E:E_mu_def} are under-determined systems
 of equations given $Q_{\Bbasis,\mu_n}$'s and $E_{\Bbasis,\mu_n}$'s provided
 that the number of photon intensities used $k$ is finite.
 To make things worse, in the finite-raw-key-length (\FK) situation, the
 measured values of $Q_{\Bbasis,\mu_n}$'s and $E_{\Bbasis,\mu_n}$'s deviate
 from their true values due to finite sampling.
 Fortunately, effective lower bounds of $Y_{\Bbasis,0}$ and $Y_{\Bbasis,1}$
 as well as upper bound of $e_{\Bbasis,1}$ are
 available~\cite{LCWXZ14,Wang05,LMC05,MQZL05,Hayashi07,Chau18}.
 In the \FK situation, these bounds can be deduced with the help of
 Hoeffding's inequality~\cite{Hoeffding}.
 (See, for example, Refs.~\cite{LCWXZ14,Chau18} for details.  Note that here
 we cannot assume the measurement outcomes are statistically independent and
 thus use more familiar tools such as central limit theorem because Eve may
 launch a coherent attack to all the photon pulses.  In fact, we do not even
 know what kind of statistical distributions do $Q_{\Bbasis,\mu_n}$'s and
 $E_{\Bbasis,\mu_n}$'s follow.)
 The third step is to deduce $R$ from these
 bounds~\cite{LCWXZ14,Wang05,LMC05,MQZL05,BMFBB16}.

 Computing lower bound of $R$ using this indirect strategy is not satisfactory
 in the \FK situation because it is unlikely for each of the finite-size
 fluctuations in $Q_{\Bbasis,\mu_n}$'s and $E_{\Bbasis,\mu_n}$'s to decrease
 the value of the provably secure key rate.
 In fact, for a given security parameter, the worst case bounds on
 $Y_{\Bbasis,0}$ and $Y_{\Bbasis,1}$ cannot be not attained simultaneously if
 the raw key length is finite.
 (This is evident, say, from the bounds of $Y_{\Bbasis,0}$ and $Y_{\Bbasis,1}$
 given by Inequalities~(2) and~(3) in Ref.~\cite{LCWXZ14} or
 Inequalities~(12a) and~(12b) in Ref.~\cite{Chau18}.  Note that there is a
 typo in Inequality~(12b) --- the $Q_{\Bbasis,\mu_i}^{\llangle k_0-i\rrangle}$
 there should be $Q_{\Bbasis,\mu_i}^{\llangle k_0-i+1\rrangle}$.  In all
 cases, the finite-size statistical fluctuation that leads to the saturation
 of lower bound for $Y_{\Bbasis,0}$ does not cause the saturation of the lower
 bound for $Y_{\Bbasis,1}$ and vice versa.)

 It is more effective if one could directly investigate the influence of
 finite-key-length on the key rate.
 To do so, one has to go beyond the use of Hoeffding's inequality to bound the
 statistical fluctuation, which only works for equally weighted sum of
 random variables that are either statistical independent or drawn from a
 finite population without replacement~\cite{Hoeffding}.
 Here we use the computation of the key rate of a specific BB84 \QKD
 protocol~\cite{BB84} that generates the raw key solely from $\Xbasis$ basis
 measurement results as an example to illustrate how to directly tackle
 statistical fluctuation in the \FK situation by means of McDiarmid-type
 inequality~\cite{McDiarmid} in statistics.
 The technique used here can be easily adapted to compute the key rates of
 other \QKD schemes using finite-dimensional qudits in the \FK situation.
 Our work here is based on an earlier preprint by one of the
 us~\cite{earlier}.
 Here we greatly extend and improve the original proposal by first proving a
 new and slightly extended McDiarmid-type of inequality on so-called
 centering sequences.
 (See Definition~\ref{Def:centering_property} for the precise definition of a
 centering sequence.)
 Then we apply it through four different methods, each giving a separate
 provably secure key rate.
 We also optimize the provably secure key rate $R$ by exploiting our freedom to
 pick the centering sequences.
 To our knowledge, this is the first time such an optimization is performed.
 In contrast, this type of optimization is not possible in previous approach
 that makes use of a less general inequality known as Hoeffding's inequality.
 It turns out that each method works best in different situations; and the
 best provably secure key rate among the four methods in realistic practical
 situation is at least about 10\% better than the state-of-the-art method
 before Ref.~\cite{earlier}.
 Moreover, for raw key length $\ell_\text{raw} \approx 10^5$ to $10^7$, this
 work almost double the secure key rate of the original proposal in
 Ref.~\cite{earlier} when four different photon intensities are used.
 From a broader perspective, the technique we introduce here is also
 applicable to bound the conclusion of a general physics experiment in the
 form of a real number due to finite-size statistical fluctuations of more
 than one type of measurement outcomes that are possibly statistically
 dependent.

\section{The \QKD Scheme By Chau In Ref.~\cite{Chau18} And The Assumptions Of
 The Security Proof}
\label{Sec:Scheme}
 To illustrate how McDiarmid-type of inequality can be used to give a better
 key rate, we consider the \QKD Scheme studied by Chau whose details can be
 found in Ref.~\cite{Chau18}.
 Note that this scheme is a slight variation of the one studied by Lim
 \emph{et al.} in Ref.~\cite{LCWXZ14}.
 The only difference is that they use three different photon intensities
 while we consider the slightly more general case of using $k\ge 2$ different
 photon intensities.
 In essence, the Scheme in Ref.~\cite{Chau18} is a decoy-state BB84 scheme
 with one-way classical communication using the $\Xbasis$-basis measurement
 results as the raw key and the $\Zbasis$-basis measurement results for phase
 error estimation.

 We assume that the light source is Poissonian distributed with intensities
 $\mu_1 > \mu_2 > \dots > \mu_k \ge 0$ with $k \ge 2$.
 Using the result in Ref.~\cite{XPZYP08}, we simply our discussion by assuming
 that these photon intensities are accurately determined and fixed throughout
 the experiment.
 This is fine because fluctuation of photon intensity of a laser source is
 negligible in practice.
 Since our aim is to demonstrate our technique of using McDiarmid-type
 inequality in the simplest possible \QKD implementation, we do not consider
 twin-field~\cite{LYDS18} or measurement device independent~\cite{LCQ12}
 setups although adaptation to these situations is straightforward though
 tedious.
 The measurement is performed using threshold photon detectors with random
 bit assignment in the event of multiple detector click.
 Last but not least, we assume both Alice and Bob have access to their own
 private perfect random number generators when choosing their preparation and
 measurement bases.

\section{Finite-Size Decoy-State Key Rate}
\label{Sec:Rate}
 Recall that the error rate for this particular variation of the decoy-state
 BB84 \QKD scheme using one-way classical communication is lower-bounded
 by~\cite{LCWXZ14,Chau18}
\begin{equation}
 p_{\Xbasis}^2 \left\{ \langle \exp(-\mu) \rangle Y_{\Xbasis,0} + \langle \mu
 \exp(-\mu) \rangle Y_{\Xbasis,1} [1-H_2(e_p)] - \Lambda_\text{EC} -
 \frac{\langle Q_{\Xbasis,\mu} \rangle}{\ell_\text{raw}} \left[ 6\log_2
 \frac{\chi}{\epsilon_\text{sec}} + \log_2 \frac{2}{\epsilon_\text{cor}}
 \right] \right\} ,
 \label{E:key_rate_basic}
\end{equation}
 where $p_\Xbasis$ denotes the probability that Alice (Bob) uses $\Xbasis$ as
 the preparation (measurement) basis, $\langle f(\mu)\rangle \equiv
 \sum_{n=1}^k p_{\mu_n} f(\mu_n)$ with $p_{\mu_n}$ being the probability for
 Alice to use photon intensity parameter $\mu_n$.
 Furthermore, $H_2(x) \equiv -x \log_2 x - (1-x) \log_2 (1-x)$ is the binary
 entropy function, $e_p$ is the phase error rate of the single photon events
 in the raw key, and $\Lambda_\text{EC}$ is the actual number of bits of
 information that leaks to Eve as Alice and Bob perform error correction on
 their raw bits.
 It is given by
\begin{equation}
 \Lambda_\text{EC} = \langle Q_{\Xbasis,\mu} H_2(E_{\Xbasis,\mu}) \rangle
 \label{E:information_leakage}
\end{equation}
 if they use the most efficient (classical) error correcting code to do the
 job.
 In addition, $\ell_\text{raw}$ is the raw sifted key
 length measured in bits, $\epsilon_\text{cor}$ is the upper bound of the
 chance that the final secret keys shared between Alice and Bob are different,
 and $\epsilon_\text{sec} = (1- p_\text{abort}) \| \rho_\text{AE} - U_\text{A}
 \otimes \rho_\text{E} \|_1 / 2$.
 Here $p_\text{abort}$ is the chance that the scheme aborts without generating
 a key, $\rho_\text{AE}$ is the classical-quantum state describing the joint
 state of Alice and Eve, $U_\text{A}$ is the uniform mixture of all the
 possible raw keys created by Alice, $\rho_\text{E}$ is the reduced density
 matrix of Eve, and $\| \cdot \|_1$ is the trace
 norm~\cite{Renner05,KGR05,RGK05}.
 Thus, Eve's information on the final key is at most $\epsilon_\text{sec}$.
 Last but not least, $\chi$ is a \QKD scheme specific factor which depends on
 the detailed security analysis used.
 In general, $\chi$ may also depend on other factors used in the \QKD scheme
 such as the number of photon intensities $k$~\cite{LCWXZ14,Chau18}.

 For BB84, $e_p \to e_{\Zbasis,1}$ as $\ell_\text{raw} \to +\infty$.
 More importantly, the best known bound on the difference between $e_p$ and
 $e_{\Zbasis,1}$ due to finite sample size correction using properties of
 the hypergeometric distribution reported in given by~\cite{Chau18,FMC10}
\begin{equation}
 e_p \le e_{\Zbasis,1} + \bar{\gamma} \left(
 \frac{\epsilon_\text{sec}}{\chi}, e_{\Zbasis,1}, \frac{s_\Zbasis
 Y_{\Zbasis,1} \langle \mu \exp(-\mu) \rangle}{\langle Q_{\Zbasis,\mu}
 \rangle}, \frac{s_\Xbasis Y_{\Xbasis,1} \langle \mu \exp(-\mu)
 \rangle}{\langle Q_{\Xbasis,\mu} \rangle} \right)
 \label{E:e_p_bound}
\end{equation}
 with probability at least $1-\epsilon_\text{sec}/\chi$, where
\begin{equation}
 \bar{\gamma}(a,b,c,d) \equiv \sqrt{\frac{(c+d)(1-b)b}{c d} \ \ln \left[
  \frac{c+d}{2\pi c d (1-b)b a^2} \right]} ,
 \label{E:gamma_def}
\end{equation}
 and $s_\Bbasis$ is the number of bits that are prepared and measured in
 $\Bbasis$ basis.
 Clearly, $s_\Xbasis = \ell_\text{raw}$ and $s_\Zbasis \approx (1-
 p_\Xbasis)^2 s_\Xbasis \langle Q_{\Zbasis,\mu} \rangle / (p_\Xbasis^2
 \langle Q_{\Xbasis,\mu} \rangle)$.
 (Note that $\bar{\gamma}$ becomes complex if $a,c,d$ are too large.
 This is because in this case no $e_p \ge e_{\Zbasis,1}$ exists with failure
 probability $a$.
 We carefully picked parameters here so that $\bar{\gamma}$ is real.)

 In the infinite-key-length limit, statistical fluctuations of
 $Q_{\Bbasis,\mu_n}$ and $E_{\Bbasis,\mu_n}$ can be ignored.
 Then based on the analysis in Ref.~\cite{Chau18} with typos corrected, one
 has
\begin{subequations}
\label{E:parameter_bounds}
\begin{equation}
 Y_{\Bbasis,0} \ge \max \left( 0, \sum_{n=1}^k a_{0n} Q_{\Bbasis,\mu_n}
 \right) \equiv \max \left( 0, \sum_{n=k_0}^k \frac{-Q_{\Bbasis,\mu_n}
 \exp[\mu_n] \hat{\prod}_{i\ne n} \mu_i}{\hat{\prod}_{j\ne n} [\mu_n - \mu_j]}
 \right) ,
 \label{E:Y0_bound}
\end{equation}
\begin{equation}
 Y_{\Bbasis,1} \ge \max \left( 0, \sum_{n=1}^k a_{1n} Q_{\Bbasis,\mu_n}
 \right) \equiv \max \left( 0, \sum_{n=3-k_0}^k \frac{-Q_{\Bbasis,\mu_n}
 \exp[\mu_n] \hat{S}_n}{\hat{\prod}_{j\ne n} [\mu_n - \mu_j]} \right)
 \label{E:Y1_bound}
\end{equation}
 and
\begin{equation}
 Y_{\Zbasis,1} e_{\Zbasis,1} \le \min \left( \frac{Y_{\Zbasis,1}}{2} ,
 \sum_{n=1}^k a_{2n} Q_{\Zbasis,\mu_n} E_{\Zbasis,\mu_n} \right) \equiv \min
 \left( \frac{Y_{\Zbasis,1}}{2} , \sum_{n=k_0}^k \frac{Q_{\Zbasis,\mu_n}
 E_{\Zbasis,\mu_n} \exp[\mu_n] \hat{S}_n}{\hat{\prod}_{j\ne n} [\mu_n -
 \mu_j]} \right) ,
 \label{E:Ye1_bound}
\end{equation}
 where $k_0 = 1 (2)$ if $k$ is even (odd), and $\hat{\prod}_{j\ne n}$ is over
 the dummy variable $j$ from $k_0$ to $k$ but skipping $n$.
 In addition, $\hat{S}_n = \sum'' \mu_{t_1} \mu_{t_2} \cdots
 \mu_{t_{k-k_0-1}}$ where the double primed sum is over $k_0 \le t_1 < t_2 <
 \cdots < t_{k-k_0-1} \le k$ with $t_1,t_2,\dots,t_{k-k_0-1} \ne n$.
 (In other words, $a_{01} = a_{21} = 0$ if $k$ is odd and $a_{11} = 0$ if $k$
 is even.)
 Note that in our subsequent analysis, we also need the following two
 inequalities, which can be proven using the same method as in
 Inequality~\eqref{E:Y1_bound}:
\begin{equation}
 Y_{\Zbasis,1} \bar{e}_{\Zbasis,1} \equiv Y_{\Zbasis,1} (1 - e_{\Zbasis,1})
 \ge \max \left( 0, \sum_{n=1}^k a_{1n} Q_{\Zbasis,\mu_n}
 \bar{E}_{\Zbasis,\mu_n} \right) \equiv \max \left( 0, \sum_{n=3-k_0}^k
 \frac{-Q_{\Zbasis,\mu_n} \bar{E}_{\Zbasis,\mu_n} \exp[\mu_n]
 \hat{S}_n}{\hat{\prod}_{j\ne n} [\mu_n - \mu_j]} \right)
 \label{E:Y1bare1_bound}
\end{equation}
 and
\begin{equation}
 Y_{\Zbasis,1} e_{\Zbasis,1} \ge \max \left( 0, \sum_{n=1}^k a_{1n}
 Q_{\Zbasis,\mu_n} E_{\Zbasis,\mu_n} \right) \equiv \max \left( 0,
 \sum_{n=3-k_0}^k \frac{-Q_{\Zbasis,\mu_n} E_{\Zbasis,\mu_n} \exp [\mu_n]
 \hat{S}_n}{\prod_{j\ne n} [\mu_n - \mu_j]} \right) ,
 \label{E:Ye1_special_bound}
\end{equation}
\end{subequations}
 where $\bar{E}_{\Zbasis,\mu_n} = 1 - E_{\Zbasis,\mu_n}$.

 Substituting Inequalities~\eqref{E:e_p_bound}
 and~\eqref{E:parameter_bounds} into Expression~\eqref{E:key_rate_basic} gives
 the following lower bound of the key rate
\begin{equation}
 \sum_{n=1}^k b_n Q_{\Xbasis,\mu_n} - p_\Xbasis^2 \left\{ \Lambda_\text{EC} +
 \frac{\langle Q_{\Xbasis,\mu} \rangle}{\ell_\text{raw}} \left[ 6\log_2
 \frac{\chi}{\epsilon_\text{sec}} + \log_2 \frac{2}{\epsilon_\text{cor}}
 \right] \right\} ,
 \label{E:key_rate}
\end{equation}
 where
\begin{equation}
 b_n = p_\Xbasis^2 \left\{ \langle \exp(-\mu) \rangle a_{0n} + \langle \mu
 \exp(-\mu) \rangle a_{1n} [1-H_2(e_p)] \right\}
 \label{E:b_n_def}
\end{equation}
 provided that $Y_{\Xbasis,0}, Y_{\Xbasis,1} > 0$.
 (The cases of $Y_{\Xbasis,0}$ or $Y_{\Xbasis,1} = 0$ can be dealt with in
 the same way by changing the definition of $b_n$ accordingly.  But these
 cases are not interesting for they likely imply $R = 0$ in realistic
 channels.)
  
 Note that the worst case key rate corresponds to the situation that the spin
 flip and phase shift errors in the raw key are uncorrelated so that Alice and
 Bob cannot use the correlation information to increase the efficiency of
 entanglement distillation.
 Thus, we may separately consider statistical fluctuations in
 $Q_{\Xbasis,\mu_n}$'s and $e_{\Zbasis,1}$ in the \FK situation.

\section{An Improved Version Of McDiarmid Inequality}
\label{Sec:McDiarmid}
 We now prove an improved version of a deep mathematical statistics result
 before applying it to improve the key rate $R$.
 Our inslight is that statistical fluctuations in $Q_{\Xbasis,\mu_n}$'s and
 $e_{\Zbasis,1}$ can be bounded using McDiarmid-type inequality.
 Actually, the first inequality of this type was proven for the case of
 statistically independent random variables using martingale technique in
 Ref.~\cite{McDiarmid}.
 The inequality we need here is a straightforward extension of Theorem~6.7 in
 Ref.~\cite{McDiarmid} and Theorem~2.3 in Ref.~\cite{McDiarmid1} for
 statistically dependent random variables.
 (See also a closely related version in Ref.~\cite{McDiarmid2}.)

 We first introduce the concept of a centering sequence~\cite{McDiarmid1}.
 The definition below is written in a more apparent manner to physicists.

\begin{Def}
 Let ${\mathbf W} = (W_1,W_2,\ldots,W_t)$ be a random real vector whose
 components $W_i$'s are possibly statistically dependent random variables each
 taking values in the set ${\mathcal W}_i$.
 Let $f_m$ be a real-valued bounded function of ${\mathbf W}$.
 Set $V_m = \left. f_m({\mathbf W}) \right|_{B_m}$ where $B_m$ denotes the
 conditions $W_j = w_j$ for $j = 1,2,\ldots,m-1$.
 Then, the sequence of random variable $\{ V_m \}_{m=1}^t$ is said to be
 {\bf centering} if $E[ U_m \mid V_{m-1} = v] \equiv E[ V_m - V_{m-1} \mid
 V_{m-1} = v]$ is a decreasing functions of $v$ for all $m = 1,2,\ldots,t$.
 (Here we use the convention that $V_0 = 0$ and assume that all conditional
 expectation values $E[\cdot \mid \cdot]$ exist.)
 \label{Def:centering_property}
\end{Def}

 Note that centering property implicitly depends on the distribution of
 ${\mathbf W}$ through the conditional expectation value of $U_m$.
 Moreover, $\{ V_m \}$ is centering if $\{ V_m \}$ is a martingale.

\begin{Thrm}
 \label{Thrm:McDiarmid}
 Using notations in Definition~\ref{Def:centering_property}, for a fixed
 $i=1,2,\ldots,t$, let $w_m \in {\mathcal W}_m$ and set
 \begin{align}
  \hat{r}_m(w_1,\ldots,w_{m-1}) &= \ess\sup \{ E[ U_m({\mathbf W}) \mid W_m =
   w_m] \}_{w_m\in {\mathcal W}_m} - \ess\inf \{ E[ U_m({\mathbf W}) \mid W_m
   = w'_m] \}_{w'_m \in {\mathcal W}_m} \nonumber \\
  &\equiv b_m(w_1,\ldots,w_{m-1}) - a_m(w_1,\ldots,w_{m-1}) .
  \label{E:range_def}
 \end{align}
 Here the symbols $\ess\sup$ and $\ess\inf$ denote the essential supremum
 and infimum, respectively.
 Further set $\hat{r}^2 \equiv \hat{r}^2(w_1,\ldots,w_{t-1}) = \sum_{m=1}^t
 \hat{r}_m^2$.
 Then $f_t({\mathbf w}) \equiv f_t(w_1,w_2,\ldots,w_t)$ obeys
 \begin{subequations}
 \label{E:McDiarmid}
 \begin{equation}
  \Pr(f_t({\mathbf w}) - E[f_t({\mathbf W})] \ge \delta) \le \exp \left[
  \frac{-2\delta^2}{\hat{r}^2(w_1,\ldots,w_{t-1})} \right]
  \label{E:McDiarmid1}
 \end{equation}
 and
 \begin{equation}
  \Pr(f_t({\mathbf w}) - E[f_t({\mathbf W})] \le -\delta) \le \exp \left[
  \frac{-2\delta^2}{\hat{r}^2(w_1,\ldots,w_{t-1})} \right]
  \label{E:McDiarmid2}
 \end{equation}
 \end{subequations}
 for any $\delta > 0$, where $\Pr(\cdot)$ denotes the occurrence probability
 of the argument.
\end{Thrm}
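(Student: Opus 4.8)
The plan is to prove the upper-tail estimate~\eqref{E:McDiarmid1} by the exponential-moment (Chernoff) method and then to deduce the lower-tail estimate~\eqref{E:McDiarmid2} from it by symmetry. Fixing $s>0$ and applying Markov's inequality to $\exp[s(f_t(\mathbf{W})-E[f_t(\mathbf{W})])]$ gives
\[
 \Pr(f_t(\mathbf{w})-E[f_t(\mathbf{W})]\ge\delta) \le \exp(-s\delta)\, E\!\left[ \exp\!\big( s( f_t(\mathbf{W})-E[f_t(\mathbf{W})] ) \big) \right] ,
\]
so the entire problem reduces to a sub-Gaussian bound on the moment generating function of the centered variable $f_t(\mathbf{W})-E[f_t(\mathbf{W})]$. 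To exploit the structure of Definition~\ref{Def:centering_property}, I would express this centered variable through the increments $U_m=V_m-V_{m-1}$ of the centering sequence, so that revealing the coordinates $W_1,\dots,W_{m-1}$ one at a time plays the role that the filtration plays in the martingale proof of the Azuma--Hoeffding and McDiarmid inequalities.

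The core of the argument is a per-coordinate bound on the conditional moment generating function. For each $m$, conditioned on the event $B_m$ that the past coordinates take their observed values, Hoeffding's lemma yields
\[
 E\!\left[ \exp(s U_m) \mid B_m \right] \le \exp\!\left( s\, E[U_m \mid B_m] + \frac{s^2 \hat{r}_m^2}{8} \right) ,
\]
where the width $\hat{r}_m$ of~\eqref{E:range_def}, the essential range of the conditional mean $w_m\mapsto E[U_m\mid W_m=w_m]$, is the correct sub-Gaussian scale: once the past is frozen, it is precisely this conditional-mean function of the single free coordinate $W_m$ that carries the deviation. I would then insert these bounds into the tower of conditional expectations, peeling off one coordinate at a time from $m=t$ down to $m=1$, so that the variance-proxy factors multiply to $\exp(s^2\hat{r}^2/8)$.

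The delicate point, and the step I expect to be the main obstacle, is showing that the accumulated linear terms $s\sum_m E[U_m\mid B_m]$ do not spoil this bound for the upper tail. This is exactly where the centering hypothesis must be used: since $v\mapsto E[U_m\mid V_{m-1}=v]$ is decreasing while $v\mapsto\exp(sv)$ is increasing for $s>0$, the drift and the exponential weight are negatively correlated, so that the high-value region, which carries the largest weight, contributes the smallest drift. A correlation or monotone-rearrangement (coupling) inequality should then let me compare each increment with one whose conditional mean is non-positive, after which the driftless Hoeffding bound applies and the linear corrections telescope to the overall mean and vanish for the centered variable. In the martingale special case the conditional mean is identically zero and this step is vacuous; for a genuine centering sequence the monotone-drift property is the substitute that forces the same conclusion, and making the comparison close cleanly across all $t$ coordinates, while keeping track of the conditioning on $B_m$ versus on $V_{m-1}$, is the crux of the whole proof.

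Finally I would optimize the Chernoff bound over $s$. Feeding $E[\exp(s(f_t(\mathbf{W})-E[f_t(\mathbf{W})]))]\le\exp(s^2\hat{r}^2/8)$ into the Markov estimate and choosing $s=4\delta/\hat{r}^2$ collapses $\exp(-s\delta+s^2\hat{r}^2/8)$ to $\exp(-2\delta^2/\hat{r}^2)$, which is~\eqref{E:McDiarmid1}. The lower-tail bound~\eqref{E:McDiarmid2} then follows by applying the same argument to $-f_t$: the sequence $\{-V_m\}$ is again centering, because composing the decreasing drift with the sign flip preserves its monotonicity, and $\hat{r}_m$ is invariant under $U_m\mapsto-U_m$, so $\hat{r}^2$ is unchanged and the identical bound results.
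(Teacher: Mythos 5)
Your proposal follows essentially the same route as the paper's proof: a Chernoff/Markov bound, peeling off the increments $U_m$ one coordinate at a time, a conditional Hoeffding-lemma (chord-above-the-exponential) bound at the scale $\hat{r}_m$ of the conditional-mean range, the negative correlation between the decreasing drift $v \mapsto E[U_m \mid V_{m-1}=v]$ and the increasing weight $e^{hv}$ --- which the paper implements precisely as ``Chebyshev's sum inequality on centering sequence'' --- followed by the optimal choice $h = 4\delta/\hat{r}^2$ and the lower tail via $-f_m$. The correlation-inequality step you flag as the crux is indeed where the paper uses the centering hypothesis, and your observation that $\{-V_m\}$ is again centering correctly justifies the symmetry step the paper leaves implicit.
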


\begin{Rem}
 This version of McDiarmid inequality is slightly stronger than the one
 reported in Ref.~\cite{McDiarmid} as we also utilize information of
 ${\mathbf w}$ in obtaining $\hat{r}$ whereas the original version in
 Ref.~\cite{McDiarmid} made use of the worst case ${\mathbf w}$.
 The proof of this theorem is based on that of Theorem~2.2 in
 Ref.~\cite{McDiarmid1}.
 \label{Rem:McDiarmid_difference}
\end{Rem}

\begin{proof}
 Note that for any $h,\delta>0$,
 \begin{alignat}{2}
 & \Pr(V_t - E[V_t] \ge \delta) \nonumber \\
 \le{}& E[\exp \{h (V_t - E[V_t] - \delta)\}] =
  e^{-h(\delta + E[V_t])} E[\exp ( h V_t)] &&
  \quad \text{(by Bernstein's inequality)} \nonumber \\
 ={}& e^{-h(\delta + E[V_t])} E[\exp( h V_{t-1} ) E[ \exp( h U_t) \mid
  V_{t-1}]] \nonumber \\
 \le{}& e^{-h(\delta + E[V_t])} E \left[ \exp( h V_{t-1} ) \left\{ \frac{(b_t
  - E[U_t \mid V_{t-1}]) e^{h a_t}}{b_t - a_t} \right. \right. \nonumber \\
 & \qquad \left. \left. + \frac{(E[U_t \mid V_{t-1}] - a_t) e^{h b_t}}{b_t -
  a_t} \right\} \right] && \quad
   \begin{matrix}
    \text{(since~} a_t \le E[U_t \mid V_{t-1}] \le b_t
     \text{~and the line joining} \hfill \\
    \,(a_t,e^{h a_t}) \text{~and~} (b_t,e^{h b_t}) \text{~is above the curve~}
     y = e^{h x} \hfill \\
    \,\text{for~} x\in [a_t,b_t]) \hfill
   \end{matrix} \nonumber \\
 \le{}& e^{-h(\delta + E[V_t])} E[\exp (h V_{t-1})] \left\{ \frac{(b_t - E[U_t
  \mid V_{t-1}]) e^{h a_t}}{b_t - a_t} \right. \nonumber \\
 & \qquad \left. + \frac{(E[U_t \mid V_{t-1}] - a_t) e^{h b_t}}{b_t - a_t}
  \right\} && \quad
  \text{(by Chebyshev's sum inequality on centering sequence)} \nonumber \\
 \le{}& e^{-h(\delta + E[V_{t-1}])} e^{-h E[U_t \mid V_{t-1}]} \left\{
  \frac{(b_t - E[U_t \mid V_{t-1}]) e^{h a_t}}{b_t - a_t} \right. \nonumber \\
 & \qquad \left. + \frac{(E[U_t \mid V_{t-1}] - a_t) e^{h b_t}}{b_t - a_t}
  \right\} . && \quad \text{(by Jensen's inequality)}
 \label{E:main_proof_1}
\end{alignat}

 To proceed, we consider the function $g(h) = -h x + \ln \{ [(b_t - x)
 e^{h a_t} + (x - a_t) e^{h b_t}] / (b_t - a_t) \}$ for $x\in [a_t,b_t]$.
 It is straightforward to check that $g(0) = \left. dg/dh \right|_{h=0} = 0$.
 Moreover,
\begin{equation}
 \frac{d^2 g}{d h^2} = \frac{(b_t-a_t)^2 (b-x)(x-a) e^{h(b_t+a_t)}}{\left[
 (b-x) e^{h a_t} + (x-a) e^{h b_t}\right]^2} \le \frac{(b_t-a_t)^2}{4}
 \label{E:dg2dh2} 
\end{equation}
 with the equality holds whenever $(b-x) e^{h a_t} = (x-a) e^{h b_t}$.
 Therefore, Taylor's theorem gives $g(h) \le h^2 (b_t - a_t)^2 / 8$ for all
 $h\ge 0$.
 Applying this inequality with $x = E[U_t \mid V_{t-1}]$ to
 Inequality~\eqref{E:main_proof_1}, we have
\begin{align}
 \Pr(f_t({\mathbf w}) - E[f_t({\mathbf W})] \ge \delta) = \Pr(V_t - E[V_t]
  \ge \delta) &\le e^{-h(\delta+E[V_{t-1}])} e^{h^2(b_t-a_t)^2/8} \nonumber \\
 &\le \exp \left[ -\delta h + \frac{h^2 \sum_{m=1}^t (b_m - a_m)^2}{8}
  \right] = \exp \left( \frac{h^2 \hat{r}^2}{8} - \delta h \right)
 \label{E:main_proof_2}
\end{align}
 for any $h>0$.
 The R.H.S. of Inequality~\eqref{E:main_proof_2} is minimized by setting
 $h = 4\delta / \hat{r}^2$; and with this $h$,
 Inequality~\eqref{E:main_proof_2} becomes Inequality~\eqref{E:McDiarmid1}.

 Finally, by applying the same argument to $-f_m$'s instead of $f_m$'s, we
 get Inequality~\eqref{E:McDiarmid2}.
 This completes our proof.
\end{proof}

\begin{Cor}
 Let ${\mathbf W} = (W_1,\ldots,W_t)$ be a random vector such that $W_m$ takes
 on value from the same bounded set of real numbers ${\mathcal W} = \{
 \alpha_j \}_{j=1}^k$ for all $m = 1,2,\ldots,t$.
 Suppose further that $W_m$'s are multivariate hypergeometrically distributed.
 Let $f_m({\mathbf W}) = \sum_{i=1}^m W_i$ for all $m = 1,2,\ldots,t$.
 Then, the sequence of random variables $\{ V_m \}_{m=1}^t$ defined in
 Definition~\ref{Def:centering_property} is centering provided that $E[V_m -
 V_{m-1} \mid V_{m-1} = v]$ is well-defined for all $v$.
 Besides, Theorem~\ref{Thrm:McDiarmid} holds with $\hat{r} = \sqrt{t}
 \Width({\mathcal W})$, where $\Width({\mathcal W}) \equiv \ess\sup
 {\mathcal W} - \ess\inf {\mathcal W}$.
 \label{Cor:centering_sum}
\end{Cor}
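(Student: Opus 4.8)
The plan is to treat the two assertions separately. First I would verify the centering property of Definition~\ref{Def:centering_property}; then I would evaluate $\hat r$ so that Theorem~\ref{Thrm:McDiarmid} applies verbatim. Both parts reduce to a single elementary fact about drawing without replacement, which is the natural setting hiding behind the phrase ``multivariate hypergeometrically distributed''.

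To set up that picture, I would regard $\mathbf W = (W_1,\ldots,W_t)$ as the values of the first $t$ items taken without replacement from a finite population of $N$ items partitioned into $k$ types, where a type-$j$ item carries value $\alpha_j$ and occurs with some multiplicity $K_j$; equivalently $\mathbf W$ is the list of the first $t$ entries of a uniformly random permutation of the whole population, so that $\mathbf W$ is exchangeable. The key observation is that the one-step conditional mean depends on the past only through the running sum. Writing $T=\sum_{j=1}^k K_j\alpha_j$ for the total value of the population and $S_{m-1}=\sum_{i=1}^{m-1}W_i$, the fact that $W_m$ is uniform over the $N-m+1$ undrawn items gives
\[
 E[W_m\mid W_1=w_1,\ldots,W_{m-1}=w_{m-1}]=\frac{T-\sum_{i=1}^{m-1}w_i}{N-m+1},
\]
because the remaining items carry total value $T-\sum_{i=1}^{m-1}w_i$ regardless of which particular items were drawn. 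Since for $f_m(\mathbf W)=\sum_{i=1}^m W_i$ the increment is $U_m=V_m-V_{m-1}=W_m$ under the conditioning $B_m$, and since the right-hand side above depends on $w_1,\ldots,w_{m-1}$ only through their sum, it survives the coarser conditioning on $V_{m-1}=v$ unchanged:
\[
 E[U_m\mid V_{m-1}=v]=E[W_m\mid S_{m-1}=v]=\frac{T-v}{N-m+1}.
\]
This is affine and strictly decreasing in $v$ for every $m=1,\ldots,t$, which is exactly the centering condition (granting the stated proviso that $v$ is an attainable value of $S_{m-1}$ so that the conditional expectation exists). Intuitively it merely says that drawing high-value items early depletes the pool and lowers the expected value of the next draw.

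For the range, I would note that under $B_m$ one has $U_m=W_m$, hence $E[U_m\mid W_m=w_m]=w_m$, so the quantity in Eq.~\eqref{E:range_def} collapses to $\hat r_m=\ess\sup\mathcal W-\ess\inf\mathcal W=\Width(\mathcal W)$, independent of $m$ and of $w_1,\ldots,w_{m-1}$. Summing squares gives $\hat r^2=\sum_{m=1}^t\hat r_m^2=t\,\Width(\mathcal W)^2$, i.e. $\hat r=\sqrt t\,\Width(\mathcal W)$, and the two concentration bounds follow by substitution into Theorem~\ref{Thrm:McDiarmid}. I do not expect a deep obstacle here; the only real work is bookkeeping. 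One must reconcile the two conditioning sets $B_m$ and $B_{m-1}$ implicit in $U_m=V_m-V_{m-1}$, and one must be careful that if some type is exhausted after $m-1$ draws then the attainable values of $W_m$ form a proper subset of $\mathcal W$, giving $\hat r_m\le\Width(\mathcal W)$ rather than equality; because the tail bound in Theorem~\ref{Thrm:McDiarmid} is monotonically increasing in $\hat r$, replacing the true $\hat r$ by the uniform value $\sqrt t\,\Width(\mathcal W)$ only weakens, and hence preserves, the inequality.
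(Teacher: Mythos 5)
Your proposal is correct and matches the paper's own proof essentially step for step: both model $\mathbf W$ as sampling without replacement from a finite population, compute $E[W_m \mid \sum_{i<m} W_i = v] = \bigl(\sum_j M_j \alpha_j - v\bigr)/(M-m+1)$ to establish the centering property, and observe that $E[U_m \mid W_m = w_m] = w_m$ so that $\hat r_m = \Width(\mathcal W)$ and $\hat r = \sqrt{t}\,\Width(\mathcal W)$. Your additional touches --- passing from full-history conditioning to sum-conditioning via the tower property, and noting that an exhausted type only shrinks $\hat r_m$ while the tail bound is monotone in $\hat r$ --- are minor refinements of the same argument rather than a different route.
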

\begin{proof}
 This proof is adapted from Example~1 in Ref.~\cite{McDiarmid1}.
 From Definition~\ref{Def:centering_property}, it suffices to show that
 $E[U_m \mid \sum_{i=1}^{m-1} W_i = \upsilon] = E[ W_m \mid \sum_{i=1}^{m-1}
 W_i = \upsilon]$ is a decreasing function of $\upsilon$.
 Suppose $W_i$'s are drawn from a collection of $M$ objects out of which
 $M_j$ of them take the value $\alpha_j$ for all $j$.
 Suppose further that among $W_i$'s with $1\le i < m$, there are $m_j$ of
 them taking the value of $\alpha_j$ for all $j$.
 Then, the probability that $W_m = \alpha_j$ is $(M_j-m_j)/(M-m+1)$.
 Moreover, the condition $\sum_{i=1}^{m-1} W_i = \upsilon$ means that $\sum
 m_j \alpha_j = \upsilon$.
 As a result, $E[W_m \mid \sum_{i=1}^{m-1} W_i = \upsilon] = \sum_j (M_j-m_j)
 \alpha_j/(M-m+1) = (\sum_j M_j \alpha_j - \upsilon)/(M-m+1)$, which is a
 decreasing function of $\upsilon$ whenever $\sum_{i=1}^{m-1} W_i = \upsilon$.
 Hence, $\{ V_m \}$ is a centering sequence.

 By applying Theorem~\ref{Thrm:McDiarmid} to $\{ U_m \}$, we have $r_m = \ess
 \sup \{ W_m \mid V_{m-1} = \upsilon \} - \ess\inf \{ W_m \mid V_{m-1} =
 \upsilon \} = \ess\sup {\mathcal W} - \ess\inf {\mathcal W} = \Width
 ({\mathcal W})$ for all $m$ and $V_{m-1}$.
 Hence, it is proved.
\end{proof}

\begin{Rem}
 The above corollary was first proven by Hoeffding in Ref.~\cite{Hoeffding}
 without using the concept of centering sequence.
 Actually, Corollary~\ref{Cor:centering_sum} is more often referred to as the
 Hoeffding's inequality.
 In fact, Hoeffding's inequality has been used to compute the provably secure
 key rate $R$ when the raw key length $\ell_\text{raw}$ is finite in previous
 works~\cite{LCWXZ14,HN14,BMFBB16,Chau18}.
 In Sec.~\ref{Sec:Appl} below, we use the above corollary to bound
 $e_{\Zbasis,1}$ in Methods~\ref{Method:conventional} and~\ref{Method:Ybare}.
 \label{Rem:relation_to_Hoeffding}
\end{Rem}

\begin{Cor}
 Let ${\mathbf W} = (W_1,\ldots,W_t)$ be a random vector where each $W_m$
 takes on value from a bounded set of real numbers ${\mathcal W} = \{ \alpha_j
 \}_{j=1}^k$.
 Suppose $W_m$'s are multivariate hypergeometrically distributed in the sense
 that they are chosen without replacement from a collection of $M$ objects
 out of which $M_j$ of them take the value of $\alpha_j$ for all $j$.
 Let $x\in [\ess\inf {\mathcal W},\ess\sup {\mathcal W}]$ and $y>0$ be two
 fixed numbers.
 Let $P\colon \{1,2,\ldots, t\} \mapsto \{1,2,\ldots,t\}$ be an arbitrary but
 fixed permutation.
 Suppose
 \begin{equation}
  y + \sum_{i=1}^t W_{P(i)} > \ess\sup {\mathcal W} - \ess\inf {\mathcal W}
  \ge 0 .
  \label{E:centering_f_condition}
 \end{equation}
 Define
 \begin{equation}
  f_m({\mathbf W}) = \frac{(t-m) x + \sum_{i=1}^m W_{P(i)}}{y + (t-m) x +
  \sum_{i=1}^m W_{P(i)}}
  \label{E:centering_f_def}
 \end{equation}
 for $m=1,2,\ldots,t$.
 Then, the sequence $\{ V_m \}_{m=1}^t$ is centering provided that
 \begin{equation}
  x \le \min_{m=1}^t \frac{2\sum_{j=1}^k M_j \alpha_j - \sup \sum_{i=1}^{m-1}
  W_{P(i)} + y - \delta}{2M-t-m+1}
  \label{E:centering_f_condition2}
 \end{equation}
 where $\delta$ is a small correlation term of the order of $\Width
 ({\mathcal W})/(y+t x)^2$.
 Furthermore, by picking $x$ to be the R.H.S. of
 Inequality~\eqref{E:centering_f_condition2}, then
 Inequality~\eqref{E:McDiarmid} is true with
 \begin{equation}
  \hat{r}^2 = \sum_{m=1}^t \left\{ \frac{y \Width({\mathcal W})}{[y+(t-m)x+
  \ess\sup {\mathcal W}+ \sum_{i=1}^{m-1} w_{P(i)}][y+(t-m)x+\ess\inf
  {\mathcal W}+\sum_{i=1}^{m-1} w_{P(i)}]} \right\}^2 ,
  \label{E:centering_f_r}
 \end{equation}
 where $\{ w_{P(i)} \}$ is a decreasing sequence.
 \label{Cor:centering_f}
\end{Cor}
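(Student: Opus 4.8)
The plan is to reduce the centering requirement of Definition~\ref{Def:centering_property} to a monotonicity statement about the hypergeometric conditional mean, exactly as in the proof of Corollary~\ref{Cor:centering_sum}, and then to read $\hat r_m$ off the explicit dependence of $f_m$ on the last revealed variable. Writing $S_m = \sum_{i=1}^m W_{P(i)}$ and $A_m = (t-m)x + S_m$ so that $f_m = A_m/(y+A_m) = 1 - y/(y+A_m)$, and using $A_m - A_{m-1} = W_{P(m)} - x$, a one-line computation gives the closed form
\[
 U_m = f_m - f_{m-1} = y\left[\frac{1}{y+A_{m-1}} - \frac{1}{y+A_m}\right] = \frac{y\,[W_{P(m)}-x]}{[y+A_{m-1}][y+A_m]} .
\]
Inequality~\eqref{E:centering_f_condition} keeps every denominator strictly positive, so $f_m\in(0,1)$ and $U_m$ is a strictly increasing function of $W_{P(m)}$. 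Since $f_{m-1}$ is a strictly increasing function of $S_{m-1}$, holding the realized values $W_{P(1)}=w_1,\ldots,W_{P(m-2)}=w_{m-2}$ fixed and conditioning on $V_{m-1}=v$ is the same as fixing all of $W_{P(1)},\ldots,W_{P(m-1)}$, with larger $v$ corresponding to larger $S_{m-1}$. Hence the centering property is equivalent to $E[U_m\mid W_{P(1)},\ldots,W_{P(m-1)}]$ being a decreasing function of $S_{m-1}$ as its last coordinate is raised.

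For the monotonicity I would expand the second reciprocal about $A_{m-1}$,
\[
 E[U_m\mid S_{m-1}] = y\,E\!\left[\frac{1}{y+A_{m-1}} - \frac{1}{y+A_{m-1}+(W_{P(m)}-x)}\right] = \frac{y\,E[W_{P(m)}-x]}{(y+A_{m-1})^2} - \frac{y\,E[(W_{P(m)}-x)^2]}{(y+A_{m-1})^3} + \cdots ,
\]
where, by the hypergeometric conditional mean already used in Corollary~\ref{Cor:centering_sum}, $E[W_{P(m)}\mid S_{m-1}] = (\sum_j M_j\alpha_j - S_{m-1})/(M-m+1)$ depends on the first $m-1$ draws only through $S_{m-1}$. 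The leading term is therefore a function of $S_{m-1}$ alone, and requiring its derivative in $S_{m-1}$ to be nonpositive rearranges---after clearing denominators and inserting $M-m+1$ for the number of remaining objects---precisely into Inequality~\eqref{E:centering_f_condition2} with $\delta=0$; the most restrictive index and the worst case $S_{m-1}=\sup\sum_{i=1}^{m-1}W_{P(i)}$ account for the $\min_m$ and the supremum appearing there. The subleading terms are the only place where the individual occupation numbers of the first $m-1$ draws enter, and collecting their contribution to the derivative into the single correction $\delta$ of order $\Width(\mathcal W)/(y+tx)^2$ completes the centering claim once $x$ is taken to be the right-hand side of~\eqref{E:centering_f_condition2}.

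With centering in hand Theorem~\ref{Thrm:McDiarmid} applies, and it remains to evaluate $\hat r_m$ of Eq.~\eqref{E:range_def}. Conditioned on the realized $w_1,\ldots,w_{m-1}$, the quantity $A_{m-1}$ is fixed and $U_m$ is strictly increasing in $W_{P(m)}$, so the essential supremum and infimum over $W_{P(m)}$ are attained at $\ess\sup\mathcal W$ and $\ess\inf\mathcal W$; subtracting the two corresponding values and simplifying with $y+A_{m-1}-x = y+(t-m)x+\sum_{i=1}^{m-1}w_{P(i)}$ gives
\[
 \hat r_m = \frac{y\,\Width(\mathcal W)}{[y+(t-m)x+\ess\sup\mathcal W+\sum_{i=1}^{m-1}w_{P(i)}][y+(t-m)x+\ess\inf\mathcal W+\sum_{i=1}^{m-1}w_{P(i)}]} ,
\]
whose square summed over $m$ is exactly the $\hat r^2$ of Eq.~\eqref{E:centering_f_r}. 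Because the $W_m$'s are exchangeable and $f_t$ is permutation invariant (its numerator and denominator both reduce to $\sum_i W_i$ at $m=t$), I am free to choose the permutation $P$: the centering condition~\eqref{E:centering_f_condition2} uses $\sup\sum_{i=1}^{m-1}W_{P(i)}$, which is $P$-independent, while ordering the \emph{observed} values decreasingly maximizes every partial sum $\sum_{i=1}^{m-1}w_{P(i)}$, hence the denominators in $\hat r_m$, yielding the smallest $\hat r$ and the tightest bound~\eqref{E:McDiarmid}. This explains why $\{w_{P(i)}\}$ is taken to be a decreasing sequence.

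I expect the genuine obstacle to be the nonlinearity of $f_m$ in $W_{P(m)}$. In the linear setting of Corollary~\ref{Cor:centering_sum}, $E[W_{P(m)}\mid\text{first }m-1]$ depends on the past only through $S_{m-1}$, so centering is immediate; here the reciprocal in $U_m$ makes $E[U_m\mid\text{first }m-1]$ genuinely depend on the full configuration of the first $m-1$ draws, not just on their sum. The crux is therefore to show that this configuration dependence is confined to the subleading terms of the expansion and to bound the associated correlation uniformly, so that it can be absorbed into the single scalar shift $\delta$ while the leading, sum-only term governs the monotonicity. Verifying that the remainder of the expansion is controlled---so that one $\delta$ suffices for all admissible configurations and all $m$---is the delicate step, and it relies essentially on Inequality~\eqref{E:centering_f_condition} keeping $y+A_m$ bounded away from zero throughout.
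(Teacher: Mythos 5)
Your proposal follows essentially the same route as the paper's proof: the same closed form $U_m = y\,[W_{P(m)}-x]/\{[y+A_{m-1}][y+A_m]\}$, the hypergeometric conditional mean $E[W_{P(m)}\mid S_{m-1}] = (\sum_j M_j\alpha_j - S_{m-1})/(M-m+1)$, a Taylor expansion whose leading term yields exactly Inequality~\eqref{E:centering_f_condition2} with the configuration-dependent remainders absorbed into $\delta$, endpoint evaluation of the increasing-in-$W_{P(m)}$ increment to get $\hat r_m$, and the decreasing rearrangement of the observed $w_{P(i)}$'s to minimize $\hat r$. The only cosmetic difference is that you differentiate the leading term with respect to $S_{m-1}$ where the paper tracks discrete elementary swaps of the occupation numbers $(m_{j_1}\!\uparrow, m_{j_2}\!\downarrow)$ and expands in $\Delta w = \alpha_{j_1}-\alpha_{j_2}$; the resulting algebra and the order-of-magnitude treatment of $\delta$ coincide.
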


\begin{proof}
 Since $(W_{P(1)},W_{P(2)},\ldots,W_{P(t)})$ is also a multivariate
 hypergeometrically distributed random vector, we only need to prove the case
 when $P$ is an identity operator as the general case can be proven in the
 same way.
 From Eq.~\eqref{E:centering_f_condition}, $f_m$ has a positive denominator
 and is an increasing function of $W_m$.
 So to prove that $\{ V_m \}$ is centering, it suffices to show that $E[U_m
 \mid \sum_{i=1}^{m-1} W_i = (m-1)w]$ is a decreasing function of $(m-1)w =
 \sum_{j=1}^k m_j \alpha_j$ for all non-negative integers $m_j$'s obeying
 $\sum_{j=1}^k m_j = m-1$.
 Since $\alpha_j$'s are fixed, the only way to change $w$ is to change $m_j$'s
 but at the same time keeping $\sum_{j=1}^n m_j$ fixed.
 Clearly, $w$ can only be changed if $m\ge 3$.
 More importantly, as $m_j$'s are integers, any such change can be expressed
 as a composition of a series of elementary changes, each increases a certain
 $m_{j_1}$ by one and decreasing a certain $m_{j_2}$ by one with $1\le j_1 \ne
 j_2 \le k$.

 Observe that
\begin{align}
 & E \left[ U_m \mid \sum_{i=1}^{m-1} W_i = (m-1) w = \sum_{j=1}^k m_j
  \alpha_j \right] \nonumber \\
 ={}& E \left[ \frac{y (W_m - x)}{\{ y + (t-m+1)x + (m-1)w \} \{ y + (t-m)x +
  (m-1)w + W_m\}} \mid \sum_{i=1}^{m-1} W_i = (m-1)w = \sum_{j=1}^k m_j
  \alpha_j \right] \nonumber \\
 ={}& \frac{y}{(M-m+1) [y+(t-m+1)x+(m-1)w]} \sum_{j=1}^k \frac{(M_j-m_j)
  (\alpha_j - x)}{y+(t-m)x+(m-1)w+\alpha_j} \nonumber \\
 \equiv{}& \frac{y}{(M-m+1) D} \sum_{j=1}^k \frac{(M_j-m_j) (\alpha_j - x)}{D
  + \alpha_j - x} .
 \label{E:centering_f_formal_sum}
\end{align}
 Moreover, after the elementary change, $w \mapsto w + \alpha_{j_1} -
 \alpha_{j_2} \equiv w + \Delta w$.
 From Inequality~\eqref{E:centering_f_condition}, $D > |\alpha_j-x|$.
 So by Taylor's theorem,
\begin{align}
 & E \left[ U_m \mid \sum_{i=1}^{m-1} W_i = (m-1) w = \sum_{j=1}^k m_j
  \alpha_j \right] \nonumber \\
 \longmapsto{}& E \left[ U_m \mid \sum_{i=1}^{m-1} W_i = (m-1) w + \Delta w =
  \alpha_{j_1} - \alpha_{j_2} + \sum_{j=1}^k m_j \alpha_j \right] \nonumber \\
 ={}& \frac{y}{(M-m+1) (D+\Delta w)} \left\{ \sum_{j=1}^k \frac{(M_j - m_j)
  (\alpha_j-x)}{D+\Delta w} \left[ 1 - \frac{\alpha_j-x}{D+\Delta w} + \xi_1
  \left( \frac{\alpha_j-x}{D+\Delta w} \right)^2 \right] +
  \frac{-(\alpha_{j_1} - x)}{D+\Delta w+\alpha_{j_1}} + \frac{\alpha_{j_2} -
  x}{D+\Delta w+\alpha_{j_2}} \right\} \nonumber \\
 ={}& \frac{y}{(M-m+1) (D+\Delta w)^2} \left[ \sum_{j=1}^k M_j \alpha_j -
  (M-m+1) x - (m-1)w - \frac{\sum_{j=1}^n (M_j - m_j) (\alpha_j - x)^2}{D+
  \Delta w} + \xi_1 \left( \frac{\alpha_j-x}{D+\Delta w} \right)^2 \right]
  \nonumber \\
 & \qquad - \frac{y\Delta w}{(M-m+1)(D+\Delta w+\alpha_{j_1}-x) (D+\Delta w+
  \alpha_{j_2}-x)}
 \label{E:centering_f_formal_sum_change}
\end{align}
 with $\xi_1\in [0,1]$.
 As $x\in [\ess\inf {\mathcal W},\ess\sup {\mathcal W}]$, we conclude that $0
 \le \sum_j (M_j - m_j)(\alpha_j - x)^2 \le (M-m+1) \Width({\mathcal W})^2$
 almost surely.
 From Inequality~\eqref{E:centering_f_condition}, we may expand
 $1/(D+\Delta w)$, $1/(D+\Delta w+\alpha_{j_1}-x)$ and $1/(D+\Delta w+
 \alpha_{j_2}-x)$ as series of $\Delta w$ via Taylor's theorem.
 In this way, the R.H.S. of Eq.~\eqref{E:centering_f_formal_sum_change} can be
 expressed in the form $E[ U_m \mid \sum_{i=1}^{m-1} W_i = (m-1) w =
 \sum_{j=1}^k m_j \alpha_j ] + g_1 \Delta w + g_2 $ with
\begin{align}
 g_1 &= -\frac{y}{(M-m+1) D^2} \left\{ \frac{2 \left[ \sum_{j=1}^k M_j
  \alpha_j - (M-m+1)x - (m-1)w \right]}{D} + \left[ 1 - \frac{\xi_2
  (\alpha_{j_1}-x)}{D} \right] \left[ 1 - \frac{\xi_3 (\alpha_{j_2}-x)}{D}
  \right] \right. \nonumber \\
 & \qquad \left. \vphantom{\frac{\left[\sum_{j=1}^n M_j\alpha_j\right]}{D}}
  - \frac{3 \sum_{j=1}^k (M_j - m_j)(\alpha_j - x)^2}{D^2} \right\} \nonumber
  \\
 &\le{} -\frac{y}{(M-m+1) D^2} \left\{ \frac{2 \left[ \sum_{j=1}^k M_j
  \alpha_j - (M-m+1)x - (m-1)w \right]}{D} + \left[ 1 - \frac{\Width
  ({\mathcal W})}{D} \right]^2 - \frac{3(M-m+1) \Width({\mathcal W})^2}{D^2}
  \right\} ,
 \label{E:centering_f_linear_coef}
\end{align}
 where $\xi_2,\xi_3\in [0,1]$.
 And the correlation term $g_2$ obeys $|g_2| \le 3y [\sum_j M_j \alpha_j -
 (M-m+1)x - (m-1)w] (\Delta w)^2 /[(M-m+1) D^4]$.

 A sufficient condition for $\{ V_m \}$ to be centering is $g_1 \Delta w + g_2
 \le 0$ for all $m$ and $w$.
 Moreover, this condition is satisfied if
\begin{equation}
 x \le \frac{2\sum_{j=1}^k M_j \alpha_j - (m-1)w + y - \delta}{2M-t-m+1}
 \label{E:centering_f_condition_inter}
\end{equation}
 for all $m = 1,\ldots,t$ and for all $(m-1) w = \sum_{i=1}^{m-1} W_i$, where
 the correlation term $\delta \le |g_2|\Delta w + 2 \Width({\mathcal W}) +
 (3M-3m+2) \Width({\mathcal W})^2/D$.
 (Note that Inequality~\eqref{E:centering_f_condition_inter} is consistent
 with the constraint that $\ess\inf {\mathcal W} \le x \le \ess\sup
 {\mathcal W}$ because this inequality is trivially satisfied when $x =
 \ess\inf {\mathcal W}$.)
 Hence, $\{ V_m \}$ is centering if
 Inequality~\eqref{E:centering_f_condition2} holds.

 We now switch back to consider the situation of an arbitrary but fixed
 permutation $P$.
 To optimize the bound in Theorem~\ref{Thrm:McDiarmid}, we use the freedom to
 pick a suitable permutation $P$ to minimize $\hat{r}$.
 From Theorem~\ref{Thrm:McDiarmid}, $\hat{r}_m = y \Width({\mathcal W}) / \{[y
 + (t-m) x + \ess\sup {\mathcal W} + \sum_{i=1}^{m-1} w_{P(i)}] [y + (t-m) x +
 \ess\inf {\mathcal W} + \sum_{i=1}^{m-1} w_{P(i)}] \}$, which is a decreasing
 function of both $x$ and $w$.
 Hence, the optimal situation occurs when we pick the permutation so that
 $w_{P(i)}$ is a decreasing function of $i$.
 In this case, $\sum_{i=1}^m w_{P(i)} / m$ is a decreasing function of $m$.
 In this way, we arrive at $\hat{r}^2$ in Eq.~\eqref{E:centering_f_r}.
\end{proof}

\begin{Rem}
 The ability to optimize $\hat{r}$ by means of picking the best possible
 permutation $P$ and hence the best possible centering sequence is a novel
 feature of McDiarmid inequality.
 As far as we know, this feature has not been exploited before.
 In contrast, from the proof of Corollary~\ref{Cor:centering_sum}, it is clear
 that the value of $\hat{r}$ obtained from the Hoeffding's inequality does not
 depend on the choice of $P$.
 In Sec.~\ref{Sec:Appl} below, we fully exploit this freedom of picking $P$ to
 bound $e_{\Zbasis,1}$ in Method~\ref{Method:aggressive}.
 Note however that the above Corollary requires the knowledge of $M_j$'s.
 In addition, $\hat{r}^2$ is written as a rather involved sum.
 Let us replace every $w_{P(i)}$ in Eq.~\eqref{E:centering_f_r} by the average
 observed value, namely, $\sum_{i=1}^t w_i / t$.
 In this way, $\hat{r}$ would increase by a factor of $\BigOh(t \Width
 ({\mathcal W})/D)$.
 Suppose that we fix $x = \sum_{i=1}^t w_i / t \equiv \langle w \rangle$ as
 well (without caring whether Inequality~\eqref{E:centering_f_condition2}
 holds or not).
 Then $\hat{r}$ would change by a factor of $\BigOh(\Width({\mathcal W})/D
 \sqrt{t})$
 most of the time due to statistical fluctuation.
 Thus, in practice, we may replace $\hat{r}$ in Eq.~\eqref{E:centering_f_r} by
 the following more convenient and useful expression
 \begin{equation}
  \hat{r} = \frac{\sqrt{t} y \Width({\mathcal W})}{[y+(t-1)\langle w\rangle+
  \ess\inf {\mathcal W}] [y+(t-1)\langle w\rangle+\ess\sup {\mathcal W}]} ,
  \label{E:centering_f_hat_r_approx}
 \end{equation}
 which does not depend on the knowledge of $M_j$'s.
 This expression for $\hat{r}$ shall be used to bound $e_{\Zbasis,1}$ in
 Method~\ref{Method:direct} to be reported in Sec.~\ref{Sec:Appl}.
 \label{Rem:approximate_solution}
\end{Rem}

\section{Application Of The Improved McDiarmid Inequality In Finding The Key
 Rate}
\label{Sec:Appl}
 There is a subtlety in applying Theorem~\ref{Thrm:McDiarmid} to study the
 statistical fluctuation of $e_{\Zbasis,1}$.
 A naive way to do so is to use Inequalities~\eqref{E:e_p_bound}
 and~\eqref{E:parameter_bounds} to obtain the bound $e_{\Zbasis,1} \le
 (\sum_{n=1}^k a_{2n} Q_{\Zbasis,\mu_n} E_{\Zbasis,\mu_n}) / (\sum_{n=1}^k
 a_{1n} Q_{\Zbasis,\mu_n})$.
 Then one could regard $Q_{\Zbasis,\mu_n}$'s and $Q_{\Zbasis,\mu_n}
 E_{\Zbasis,\mu_n}$'s as random variables and directly apply
 Theorem~\ref{Thrm:McDiarmid} and Definition~\ref{Def:centering_property} to
 the R.H.S. of the above inequality.
 Nonetheless, it does not work for the R.H.S. of this inequality need not be
 bounded.
 Besides, the bound obtained is not strong enough even if we ignore the
 boundedness problem.

 To proceed, we first write $Q_{\Zbasis,\mu_n} = \sum_j \tilde{W}_{nj} /
 \tilde{s}_{\Zbasis,\mu_n}$ where $\tilde{s}_{\Zbasis,\mu_n}$ is the number of
 photon pulses that Alice prepares using photon intensity $\mu_n$ and that
 Alice prepares and Bob tries to measure (but may or may not have detection)
 in $\Zbasis$ basis.
 In addition, $\tilde{W}_{nj}$ denotes the possibly correlated random variable
 whose value is $1$ ($0$) if the $j$th photon pulse among the
 $\tilde{s}_{\Zbasis,\mu_n}$ photon pulses is (not) detected by Bob.
 Clearly, $\tilde{s}_{\Zbasis,\mu_n} \approx T p_\Zbasis^2 p_{\mu_n}$ with $T$
 being the total number of photon pulses sent by Alice and $p_\Zbasis = 1 -
 p_\Xbasis$ is the probability for Alice (Bob) to prepare (measure) in the
 $\Zbasis$ basis.
 Since $s_\Zbasis \approx T p_\Zbasis^2 \langle Q_{\Zbasis,\mu} \rangle$, I
 arrive at
\begin{subequations}
\label{E:Q_Zbasis_random_variable_expressions}
\begin{equation}
 Y_{\Zbasis,1} \ge \max \left( 0, \sum_{n=1}^k a_{1n} Q_{\Zbasis,\mu_n}
 \right) = \max \left( 0, \frac{\langle Q_{\Zbasis,\mu} \rangle}{s_\Zbasis}
 \sum_{n=1}^k \left\{ \frac{a_{1n}}{p_{\mu_n}} \left[ \sum_j \tilde{W}_{nj}
 \right] \right\} \right) = \max \left( 0, \frac{\langle Q_{\Zbasis,\mu}
 \rangle}{s_\Zbasis} \sum_{i=1}^{s_\Zbasis} W_{\Zbasis,i} \right) .
 \label{E:Q_Zbasis_random_variables_expression_for_Y}
\end{equation}
 Here $W_{\Zbasis,i}$ is the random variable that takes the value $a_{1n} /
 p_{\mu_n}$ if the $i$th photon pulse that are prepared by Alice and then
 successfully measured by Bob both in the $\Zbasis$ basis is in fact prepared
 using photon intensity $\mu_n$.
 Recall that Eve knows the number of photons in each pulse and may act
 accordingly.
 However, she does not know the photon intensity parameter used in each pulse
 and the preparation basis until the pulse is measured by Bob.
 Hence, $W_{\Zbasis,n}$'s may be correlated.
 Actually, the most general situation is that $W_{\Zbasis,n}$'s are drawn from
 a larger population without replacement.
 That is to say, these random variables obey the multivariate hypergeometric
 distribution.
 By the same argument, Inequalities~\eqref{E:Ye1_bound}
 and~\eqref{E:Y1bare1_bound} gives
\begin{equation}
 Y_{\Zbasis,1} e_{\Zbasis,1} \le \min \left( \frac{Y_{\Zbasis,1}}{2},
 \sum_{n=1}^k a_{2n} Q_{\Zbasis,\mu_n} E_{\Zbasis,\mu_n} \right) = \min \left(
 \frac{Y_{\Zbasis,1}}{2}, \frac{\langle Q_{\Zbasis,\mu} \rangle}{s_{\Zbasis}}
 \sum_{i=1}^{s^\text{e}_{\Zbasis}} W^\text{e}_{\Zbasis,i} \right)
 \label{E:Q_Zbasis_random_variables_expression_for_Ye}
\end{equation}
 and
\begin{equation}
 Y_{\Zbasis,1} \bar{e}_{\Zbasis,1} \ge \max \left( 0, \sum_{n=1}^k a_{1n}
 Q_{\Zbasis,\mu_n} \bar{E}_{\Zbasis,\mu_n} \right) = \max \left( 0,
 \frac{\langle Q_{\Zbasis,\mu} \rangle}{s_{\Zbasis}}
 \sum_{i=1}^{s^{\bar{e}}_{\Zbasis}} W^{\bar{e}}_{\Zbasis,i} \right) ,
 \label{E:Q_Zbasis_random_variables_expression_for_Ybare}
\end{equation}
\end{subequations}
 where $s^\text{e}_{\Zbasis} = s_{\Zbasis} \langle Q_{\Zbasis,\mu}
 E_{\Zbasis,\mu} \rangle / \langle Q_{\Zbasis,\mu} \rangle$ and
 $s^{\bar{\text{e}}}_{\Zbasis} = s_{\Zbasis} \langle Q_{\Zbasis,\mu}
 \bar{E}_{\Zbasis,\mu} \rangle / \langle Q_{\Zbasis,\mu} \rangle$) are the
 number of bits that are prepared and successfully measured in the $\Zbasis$
 basis such that the preparation by Alice and measurement result by Bob are
 unequal and equal, respectively.
 Moreover, $W^\text{e}_{\Zbasis,i}$'s ($W^{\bar{\text{e}}}_{\Zbasis,i}$'s) are
 multivariate hypergeometrically distributed random variables taking values in
 the set $\{ a_{2n} / p_{\mu_n} \}_{n=1}^k$ ($\{ a_{1n} / p_{\mu_n}
 \}_{n=1}^k$).

 From Inequalities~\eqref{E:Q_Zbasis_random_variables_expression_for_Y}
 --\eqref{E:Q_Zbasis_random_variables_expression_for_Ybare}, $e_{\Zbasis,1}$
 obeys
\begin{subequations}
\label{E:e_Z_1_inequalities}
\begin{equation}
 e_{\Zbasis,1} \le \max \left( 0, \min \left( \frac{1}{2},
 \frac{\sum_{i=1}^{s^\text{e}_{\Zbasis}}
 W^\text{e}_{\Zbasis,i}}{\sum_{j=1}^{s_{\Zbasis}} W_{\Zbasis,j}} \right)
 \right)
 \label{E:e_Z_1_expression1}
\end{equation}
 and
\begin{equation}
 e_{\Zbasis,1} \le \max \left( 0 , \min \left( \frac{1}{2},
 \frac{\sum_{i=1}^{s^\text{e}_{\Zbasis}}
 W^\text{e}_{\Zbasis,i}}{\sum_{i=1}^{s^\text{e}_{\Zbasis}}
 W^\text{e}_{\Zbasis,i} +
 \sum_{j=1}^{s^{\bar{\text{e}}}_{\Zbasis}} W^{\bar{\text{e}}}_{\Zbasis,j}}
 \right) \right) .
 \label{E:e_Z_1_expression2}
\end{equation}
\end{subequations}

 Interestingly, these two inequalities can be used to give four different
 bounds on the finite-size statistical fluctuations in $e_{\Zbasis,1}$.
 More importantly, these four bounds are
\begin{enumerate}
 \item \label{Method:conventional}
  Use an upper bound of $\sum_{i=1}^{s_\Zbasis^\text{e}}
  W_{\Zbasis,i}^\text{e}$ and a lower bound of $\sum_{j=1}^{s_\Zbasis}
  W_{\Zbasis,j}$ to deduce an upper bound of $e_{\Zbasis,1}$.
  Specifically, from Corollary~\ref{Cor:centering_sum}, we conclude that the
  true value of $\sum_{j=1}^{s_\Zbasis} W_{\Zbasis,j}$ is less than the
  observed value by $\left[ s_\Zbasis \ln (1/\epsilon_\Zbasis)/2 \right]^{1/2}
  \Width(\{a_{1n} / p_{\mu_n} \}_{n=1}^k)$ with probability at most
  $\epsilon_\Zbasis$.
  (Recall that $\Width({\mathcal W})$ of a bounded set ${\mathcal W}$ of real
  numbers is defined as $\ess\sup {\mathcal W} - \ess\inf {\mathcal W}$.)
  And the true value of $\sum_{i=1}^{s_\Zbasis^\text{e}}
  W_{\Zbasis,i}^\text{e}$ is greater than its observed value by $\left[
  s^\text{e}_\Zbasis \ln (1/\epsilon^\text{e}_\Zbasis) / 2 \right]^{1/2}
  \Width(\{ a_{2n} / p_{\mu_n} \}_{n=1}^k) = \left[ s_\Zbasis \langle
  Q_{\Zbasis,\mu} E_{\Zbasis,\mu} \rangle \ln (1/\epsilon^\text{e}_\Zbasis) /
  2 \langle Q_{\Zbasis,\mu} \rangle \right]^{1/2} \Width (\{ a_{2n} /
  p_{\mu_n} \}_{n=1}^k)$ with probability at most $\epsilon_\Zbasis^\text{e}$.
  Since $W_{\Zbasis_i}^\text{e}$ and $W_{\Zbasis,j}$ are positively correlated,
  from Inequalities~\eqref{E:Q_Zbasis_random_variables_expression_for_Y},
  \eqref{E:Q_Zbasis_random_variables_expression_for_Ye}
  and~\eqref{E:e_Z_1_expression1}, we have
  \begin{subequations}
  \begin{equation}
   e_{\Zbasis,1} \le \max \left( 0, \min \left( \frac{1}{2},
   \frac{\sum_{n=1}^k a_{2n} Q_{\Zbasis,\mu_n} E_{\Zbasis,\mu_n} + \Delta
   Y_{\Zbasis,1} e_{\Zbasis,1}}{\sum_{n=1}^k a_{1n} Q_{\Zbasis,\mu_n} - \Delta
   Y_{\Zbasis,1}} \right) \right)
   \label{E:e_Z1_bound_method1}
  \end{equation}
  with probability at least $1-\epsilon_\Zbasis-\epsilon_\Zbasis^\text{e}$,
  where
  \begin{equation}
   \Delta Y_{\Zbasis,1} e_{\Zbasis,1} = \left[ \frac{\langle Q_{\Zbasis,\mu}
   \rangle \langle Q_{\Zbasis,\mu} E_{\Zbasis,\mu} \rangle \ln
   (1/\epsilon^\text{e}_\Zbasis)}{2 s_\Zbasis} \right]^{1/2} \Width \left(
   \left\{ \frac{a_{2n}}{p_{\mu_n}} \right\}_{n=1}^k \right)
   \label{E:Delta_Ye_def}
  \end{equation}
  and
  \begin{equation}
   \Delta Y_{\Zbasis,1} = \langle Q_{\Zbasis,\mu} \rangle \left[ \frac{\ln
   (1/\epsilon_\Zbasis)}{2 s_\Zbasis} \right]^{1/2} \Width \left( \left\{
   \frac{a_{1n}}{p_{\mu_n}} \right\}_{n=1}^k \right) .
   \label{E:Delta_Y_def}
  \end{equation}
  Incidentally, this is the method reported in the preprint by one of us in
  Ref.~\cite{earlier}.
  Moreover, similar bounds on statistical fluctuations of $Q_{\Bbasis,n}$'s
  and $Q_{\Bbasis,1} E_{\Bbasis,1}$ have been obtained using Hoeffding's
  inequality in Refs.~\cite{LCWXZ14,Chau18}.
  That method is not as effective as the one reported here since they
  indirectly deal with finite sampling statistical fluctuation of
  $Y_{\Zbasis,1}$ and $Y_{\Zbasis,1} e_{\Zbasis,1}$.
 \item \label{Method:Ybare}
  Alternatively, we may use Inequality~\eqref{E:e_Z_1_expression2} and
  Corollary~\ref{Cor:centering_sum} to bound $e_{\Zbasis,1}$.
  Specifically, the true value of $\sum_{j=1}^{s_\Zbasis^{\bar{\text{e}}}}
  W_{\Zbasis,j}^{\bar{\text{e}}}$ is less than the observed value by
  $[s_\Zbasis \langle Q_{\Zbasis,\mu} \bar{E}_{\Zbasis,\mu} \rangle \ln (1/
  \epsilon_\Zbasis^{\bar{\text{e}}}) / 2\langle Q_{\Zbasis,\mu} \rangle]^{1/2}
  \Width(\{ a_{1n}/p_{\mu_n} \}_{n=1}^k)$ with probability at most
  $\epsilon_\Zbasis^{\bar{\text{e}}}$.
  Note that $W_{\Zbasis,j}^\text{e}$'s and $W_{\Zbasis,j}^{\bar{\text{e}}}$'s
  are statistically independent.
  Therefore, from
  Inequalities~\eqref{E:Q_Zbasis_random_variables_expression_for_Ye},
  \eqref{E:Q_Zbasis_random_variables_expression_for_Ybare}
  and~\eqref{E:e_Z_1_expression2}, we have
  \begin{equation}
   e_{\Zbasis,1} \le \max \left( 0, \min \left( \frac{1}{2},
   \frac{\sum_{n=1}^k a_{2n} Q_{\Zbasis,\mu_n} E_{\Zbasis,\mu_n} + \Delta
   Y_{\Zbasis,1} e_{\Zbasis,1}}{\sum_{n=1}^k a_{1n} Q_{\Zbasis,\mu_n}
   \bar{E}_{\Zbasis,\mu_n} + \sum_{n=1}^k a_{2n} Q_{\Zbasis,\mu_n}
   E_{\Zbasis,\mu_n} - \Delta Y_{\Zbasis,1} \bar{e}_{\Zbasis,1} + \Delta
   Y_{\Zbasis,1} e_{\Zbasis,1}} \right) \right)
   \label{E:e_Z1_bound_method2}
  \end{equation}
  with probability at least $1 - \epsilon_\Zbasis^\text{e} -
  \epsilon_\Zbasis^{\bar{\text{e}}}$, where $\Delta Y_{\Zbasis,1}
  e_{\Zbasis,1}$ is given by Eq.~\eqref{E:Delta_Ye_def} and
  \begin{equation}
   \Delta Y_{\Zbasis,1} \bar{e}_{\Zbasis,1} = \left[ \frac{\langle
   Q_{\Zbasis,\mu} \rangle \langle Q_{\Zbasis,\mu} \bar{E}_{\Zbasis,\mu}
   \rangle \ln (1/\epsilon^\text{e}_\Zbasis)}{2 s_\Zbasis} \right]^{1/2}
   \Width \left( \left\{ \frac{a_{1n}}{p_{\mu_n}} \right\}_{n=1}^k \right) .
   \label{E:Delta_Ybare_def}
  \end{equation}
 \item \label{Method:direct}
  An even more interesting way to bound $e_{\Zbasis,1}$ is to use
  Inequality~\eqref{E:e_Z_1_expression2}, Corollary~\ref{Cor:centering_f} and
  Remark~\ref{Rem:approximate_solution}.
  Since $\langle w\rangle$ in this case is the measured $Y_{\Zbasis,1}
  e_{\Zbasis,1} / s_\Zbasis^\text{e}$, which is lower-bounded by
  Inequality~\eqref{E:Ye1_special_bound},
  Remark~\ref{Rem:approximate_solution} gives
  \begin{equation}
   e_{\Zbasis,1} \le \max \left( 0, \min \left( \frac{1}{2},
   \frac{\sum_{n=1}^k a_{2n} Q_{\Zbasis,\mu_n} E_{\Zbasis,\mu_n}}{\sum_{n=1}^k
   a_{1n} Q_{\Zbasis,\mu_n} \bar{E}_{\Zbasis,\mu_n} + \sum_{n=1}^k a_{2n}
   Q_{\Zbasis,\mu_n} E_{\Zbasis,\mu_n} - \Delta Y_{\Zbasis,1}
   \bar{e}_{\Zbasis,1}} + \Delta e_{\Zbasis,1} \right) \right)
   \label{E:e_Z1_bound_method3}
  \end{equation}
  with probability at least $1-\epsilon_\Zbasis^{\bar{\text{e}}}-
  \epsilon_\Zbasis^\text{e}$, where
  \begin{align}
   \Delta e_{\Zbasis,1} &= \left[ \frac{\langle Q_{\Zbasis,\mu} \rangle
    \langle Q_{\Zbasis,\mu} E_{\Zbasis,\mu} \rangle \ln
    (1/\epsilon_\Zbasis^\text{e})}{2 s_\Zbasis} \right]^{1/2} \left(
    \sum_{n=1}^k a_{1n} Q_{\Zbasis,\mu_n} \bar{E}_{\Zbasis,\mu_n} - \Delta
    Y_{\Zbasis,1} \bar{e}_{\Zbasis,1} \right) \Width \left( \left\{
    \frac{a_{2n}}{p_{\mu_n}} \right\}_{n=1}^k \right) \nonumber \\
   & \quad \times \left[ \sum_{n=1}^k a_{1n} Q_{\Zbasis,\mu_n}
    \bar{E}_{\Zbasis,\mu_n} - \Delta Y_{\Zbasis,1} \bar{e}_{\Zbasis,1} +
    \left( 1 - \frac{\langle Q_{\Zbasis,\mu} \rangle}{s_\Zbasis \langle
    Q_{\Zbasis,\mu} E_{\Zbasis,\mu} \rangle} \right) \sum_{n=1}^k a_{1n}
    Q_{\Zbasis,\mu_n} E_{\Zbasis,\mu_n} + \frac{\langle Q_{\Zbasis,\mu}
    \rangle^2}{s_\Zbasis^2 \langle Q_{\Zbasis,\mu} E_{\Zbasis,\mu} \rangle}
    \max_{n=1}^k \left\{ \frac{a_{2n}}{p_{\mu_n}} \right\} \right]^{-1}
    \nonumber \\
   & \quad \times \left[ \sum_{n=1}^k a_{1n} Q_{\Zbasis,\mu_n}
    \bar{E}_{\Zbasis,\mu_n} - \Delta Y_{\Zbasis,1} \bar{e}_{\Zbasis,1} +
    \left( 1 - \frac{\langle Q_{\Zbasis,\mu} \rangle}{s_\Zbasis \langle
    Q_{\Zbasis,\mu} E_{\Zbasis,\mu} \rangle} \right) \sum_{n=1}^k a_{1n}
    Q_{\Zbasis,\mu_n} E_{\Zbasis,\mu_n} + \frac{\langle Q_{\Zbasis,\mu}
    \rangle^2}{s_\Zbasis^2 \langle Q_{\Zbasis,\mu} E_{\Zbasis,\mu} \rangle}
    \min_{n=1}^k \left\{ \frac{a_{2n}}{p_{\mu_n}} \right\} \right]^{-1}
  \end{align}
  provided that Inequalities~\eqref{E:centering_f_condition}
  and~\eqref{E:centering_f_condition2} hold.
  (See Ref.~\cite{Joseph} for an alternative proof of this result.)
 \item \label{Method:aggressive}
  There is an alternative way to apply Inequality~\eqref{E:e_Z_1_expression2}
  and Corollary~\ref{Cor:centering_f} to find $\Delta e_{\Zbasis,1}$ in
  Inequality~\eqref{E:e_Z1_bound_method3}, which is quite aggressive.
  Since $\sum_{i=1}^t w_i / t$ is an estimate of $Y_{\Zbasis,1} e_{\Zbasis,1}
  = \langle w\rangle$, we know from Corollary~\ref{Cor:centering_sum} and
  Inequality~\eqref{E:Ye1_special_bound} that $\langle w\rangle \ge
  \sum_{n=1}^k a_{1n} Q_{\Zbasis,\mu_n} E_{\Zbasis,\mu_n} - \Delta
  Y_{\Zbasis,1} e_{\Zbasis,1}$ with probability at least $1-
  \epsilon_\Zbasis^\text{e}$.
  In other words, by fixing $t = s_\Zbasis^\text{e}$ and $x = (\sum_{n=1}^k
  a_{1n} Q_{\Zbasis,\mu_n} E_{\Zbasis,\mu_n} - \Delta Y_{\Zbasis,1}
  e_{\Zbasis,1})/s_\Zbasis^\text{e}$, we conclude that
  Inequality~\eqref{E:centering_f_condition_inter} is satisfied with
  probability at least $1-\epsilon_\Zbasis^\text{e}$.
  Next, we could upper-bound the R.H.S. of Eq.~\eqref{E:centering_f_r} by
  approximating the sum over $m$ there by an integral.
  Specifically, set $y = \sum_{n=1}^k a_{1n} Q_{\Zbasis,\mu_n}
  \bar{E}_{\Zbasis,\mu_n} - \Delta Y_{\Zbasis,1} \bar{e}_{\Zbasis,1}$, then
  \begin{align}
   \hat{r}^2 &\approx \sum_{m=1}^t \int_0^1 \left\{ \frac{y}{y+(t-m+1)x+
    \ess\inf {\mathcal W} +\sum_{i=1}^{m-1} w_{P(i)} + \mu [w_{P(m)}-x]}
    \right. \nonumber \\
   & \qquad \left. - \frac{y}{y+(t-m+1)x+\ess\sup {\mathcal W} +
    \sum_{i=1}^{m-1} w_{P(i)} + \mu [w_{P(m)}-x]} \right\}^2 d\mu \nonumber \\
   &= y^2 \sum_{m=1}^t \frac{1}{w_{P(m)}-x} \left( - \frac{1}{y+(t-m+1)x+
    \ess\inf {\mathcal W} + \sum_{i=1}^{m-1} w_{P(i)} + \mu[ w_{P(m)}-x]}
    \right. \nonumber \\
   & \qquad - \frac{1}{y+(t-m+1)x+\ess\sup {\mathcal W} + \sum_{i=1}^{m-1}
    w_{P(i)} + \mu[ w_{P(m)}-x]} \nonumber \\
   & \qquad \left. \left. + \frac{2}{\Width({\mathcal W})} \ln \left\{
    \frac{y+(t-m+1)x+\ess\sup {\mathcal W} + \sum_{i=1}^{m-1} w_{P(i)} + \mu
    [w_{P(m)}-x]}{y+(t-m+1)x+\ess\inf {\mathcal W} + \sum_{i=1}^{m-1} w_{P(i)}
    + \mu [w_{P(m)}-x]} \right\} \right) \right|_{\mu = 0}^1 .
    \label{E:method4_integral}
  \end{align}
  As $w_{P(m)}$'s are arranged in descending order and ${\mathcal W}$ in our
  case is the set $\{ \langle Q_{\Zbasis,\mu}\rangle a_{2n}/(p_{\mu_n}
  s_\Zbasis) \}_{n=1}^k$ of at most $k$ elements, R.H.S. of the above
  inequality can be simplified to a big sum of at most $k$ terms.
  To be more explicit, suppose the descending sequence $\{ w_{P(m)}
  \}_{m=1}^t$ contains $n^{(1)}$ copies of $w^{(1)}$, followed by $n^{(2)}$
  copies of $w^{(2)}$, and so on until ending with $n^{(k)}$ copies of
  $w^{(k)}$.
  Surely, $\sum_{i=1}^k n^{(i)} = t = s_\Zbasis^\text{e}$ and $\sum_{i=1}^k
  n^{(i)} w^{(i)} / t$ is the observed $\langle w\rangle$.
  Then, Eq.~\eqref{E:method4_integral} becomes
  \begin{align}
   \hat{r}^2 &\approx y^2 \sum_{m=1}^k \frac{1}{w^{(m)}-x} \left( -
    \frac{1}{y+(t-\sum_{i<m} n^{(i)}+1)x+ \ess\inf {\mathcal W} + \sum_{i<m}
    n^{(i)} w^{(i)} + \mu[ w^{(m)}-x]} \right. \nonumber \\
   & \qquad - \frac{1}{y+(t-\sum_{i<m} n^{(i)}+1)x+\ess\sup {\mathcal W} +
    \sum_{i<m} n^{(i)} w^{(i)} + \mu[ w^{(m)}-x]} \nonumber \\
   & \qquad \left. \left. + \frac{2}{\Width({\mathcal W})} \ln \left\{
    \frac{y+(t-\sum_{i<m} n^{(i)}+1)x+\ess\sup {\mathcal W} + \sum_{i<m}
    n^{(i)} w^{(i)} + \mu [w^{(m)}-x]}{y+(t-\sum_{i<m} n^{(i)}+1)x+\ess\inf
    {\mathcal W} + \sum_{i<m} n^{(i)} w^{(i)} + \mu [w^{(m)}-x]} \right\}
   \right) \right|_{\mu = 0}^{n^{(m)}} ,
   \label{E:method4_integral_explicit}
  \end{align}
  which is efficient to compute.
  (Note that the sum in R.H.S. of Eq.~\eqref{E:centering_f_r} is a decreasing
  function of $x$ and $w_{P(i)}$'s, the above integral approximation is
  accurate up to a correction term of at most $\Width({\mathcal W})^2 =
  \BigOh(1/s_\Zbasis^2)$.
  Surely, this correction can be safely ignored in practice provided that
  $t = s_\Zbasis^\text{e} \gtrsim 10^4$.)
  In this way, Inequality~\eqref{E:e_Z1_bound_method3} holds with probability
  at least $1-\epsilon_\Zbasis^{\bar{\text{e}}}-2\epsilon_\Zbasis^\text{e}$
  with
  \begin{equation}
   \Delta e_{\Zbasis,1} = \hat{r} \left[ \frac{\ln
   (1/\epsilon_\Zbasis^\text{e})}{2} \right]^{1/2} ,
   \label{E:e_Z1_bound_method4}
  \end{equation}
  \end{subequations}
  where $\hat{r}$ is given by the R.H.S. of
  Inequality~\eqref{E:method4_integral_explicit}.
\end{enumerate}

 In reality, we use the minimum of the above four methods to upper-bound the
 value of $e_{\Zbasis,1}$.
 To study the statistical fluctuation of $R$, it remains to consider the
 fluctuation of $Q_{\Xbasis,\mu_n}$ in the first term in
 Expression~\eqref{E:key_rate}.
 (Although the second term also depends on $Q_{\Xbasis,\mu_n}$'s implicitly
 through $\Lambda_\text{EC}$, statistical fluctuation is absent from this
 term.
 This is because $\Lambda_\text{EC}$ is the amount of information leaking to
 Eve during classical post-processing of the measured raw bits.
 Thus, it depends on the observed values of $Q_{\Xbasis,\mu_n}$'s and
 $E_{\Xbasis,\mu_n}$'s instead of their true values.)
 Using the same technique as in the estimation of statistical fluctuation in
 $e_{\Zbasis,1}$, the first term of Expression~\eqref{E:key_rate} can be
 rewritten as $\langle Q_{\Xbasis,\mu} \rangle \sum_{i=1}^{s_\Xbasis}
 W_{\Xbasis,i}$ where $W_{\Xbasis,i}$'s are multivariate hypergeometrically
 distributed random variables each taken values in the set $\{ b_n / p_{\mu_n}
 \}_{n=1}^k$.
 Here $b_n$ is given by Eq.~\eqref{E:b_n_def} with $e_p$ equals the R.H.S. of
 Inequality~\eqref{E:e_p_bound} where $e_{\Zbasis,1}$ is given by any one of
 the following four equations depending on which of the four methods we use:
 Eq.~\eqref{E:e_Z1_bound_method1}, \eqref{E:e_Z1_bound_method2},
 \eqref{E:e_Z1_bound_method3} and~\eqref{E:e_Z1_bound_method4}.
 Corollary~\ref{Cor:centering_sum} implies that due to statistical
 fluctuation, the true value of the first term in
 Expression~\eqref{E:key_rate} is lower than the observed value by
 $\langle Q_{\Xbasis,\mu} \rangle \left[ \ln (1/\epsilon_\Xbasis) / (2
 s_\Xbasis) \right]^{1/2} \Width (\{ b_n / p_{\mu_n} \}_{n=1}^k)$ with
 probability at most $\epsilon_\Xbasis$.
 We remark that this way of finding a lower bound for $\sum_n b_n Q_{\Xbasis,
 \mu_n}$ is more direct than the standard one that separately bounds
 $Y_{\Xbasis,0}$ and $Y_{\Xbasis,1}$~\cite{LCWXZ14,BMFBB16,Chau18,Hayashi07}.

 Putting everything together and by setting $\epsilon_\Xbasis =
 \epsilon_\Zbasis = \epsilon_\Zbasis^\text{e} =
 \epsilon_\Zbasis^{\bar{\text{e}}} = \epsilon_{\bar{\gamma}} =
 \epsilon_\text{sec} / \chi$, we conclude that the secret key rate $R$
 satisfies
\begin{equation}
 R = \sum_{n=1}^k b_n Q_{\Xbasis,\mu_n} - \langle Q_{\Xbasis,\mu} \rangle
 \left\{ \frac{\ln[\chi / \epsilon_\text{sec}]}{2s_\Xbasis} \right\}^{1/2}
 \Width \left( \left\{ \frac{b_n}{p_{\mu_n}} \right\}_{n=1}^k \right) -
 p_\Xbasis^2 \left\{ \langle Q_{\Xbasis,\mu} H_2(E_{\Xbasis,\mu}) \rangle +
 \frac{\langle Q_{\Xbasis,\mu} \rangle}{s_\Xbasis} \left[ 6\log_2
 \frac{\chi}{\epsilon_\text{sec}} + \log_2 \frac{2}{\epsilon_\text{cor}}
 \right] \right\} ,
 \label{E:finite-size_key_rate}
\end{equation}
 where $b_n = b_n(e_p)$ is given by Eq.~\eqref{E:b_n_def}.
 Here $e_p$ equals the R.H.S. of Inequality~\eqref{E:e_p_bound} with
 $e_{\Zbasis,1}$ given by Eq.~\eqref{E:e_Z1_bound_method1},
 \eqref{E:e_Z1_bound_method2}, \eqref{E:e_Z1_bound_method3}
 or~\eqref{E:e_Z1_bound_method4}.
 Note that $\chi = 9 = 4+1+4$ for Methods~\ref{Method:conventional}
 to~\ref{Method:direct} and $\chi = 10$ for Method~\ref{Method:aggressive}.
 (Here the first number $4$ comes from the generalized chain rule for smooth
 entropy in Ref.~\cite{LCWXZ14}, the number $1$ comes from the finite-size
 correction of the raw key in Eq.~(B1) of Ref.~\cite{LCWXZ14}, and the last
 number $4$ comes from $\epsilon_{\bar{\gamma}}$, $\epsilon_\Xbasis$,
 $\epsilon_\Zbasis^\text{e}$ as well as either $\epsilon_\Zbasis$ or
 $\epsilon_\Zbasis^{\bar{\text{e}}}$.
 Moreover, $\chi$ for Method~\ref{Method:aggressive} is larger than the rest
 by 1 because of the extra condition on the statistical fluctuation of a lower
 bound of $Y_{\Zbasis,1} e_{\Zbasis,1}$.)
 Interestingly, unlike the schemes used in Refs.~\cite{LCWXZ14,HN14,Chau18},
 the number $\chi$ in our scheme is independent on the number of photon
 intensities $k$ used.
 This is because we directly tackle the finite sample statistical fluctuations
 of quantities like $Y_{\Bbasis,1}$.
 Note however that even though $\chi$ does not depend on $k$, it does not mean
 that one could use arbitrarily large number of photon intensities as decoys
 (so as to obtain better bounds on quantities like $Y_{\Bbasis,1}$) without
 adversely affecting the key rate for a fixed finite $s_\Xbasis$.
 The reason is that $\Width (\{ a_{1n} / p_{\mu_n} \}_{n=1}^k \})$, $\Width
 (\{ a_{2n} / p_{\mu_n} \}_{n=1}^k \})$ and $\Width (\{ b_n / p_{\mu_n}
 \}_{n=1}^k \})$ diverge as $k\to +\infty$ due to divergence of $a_{1n}$,
 $a_{2n}$ and $b_n$~\cite{Chau18} as well as the decrease in $\min \{
 p_{\mu_n} \}_{n=1}^k$.
 Recall that computing $a_{1n}$, $a_{2n}$ and $b_n$ is numerically stable and
 with minimal lost in precision if $\mu_n - \mu_{n+1} \gtrsim 0.1$ for $n=1,2,
 \dots,k-1$~\cite{Chau18}.
 This means the number of photon intensities $k$ used in practice should be
 $\lesssim 10$.
 
\section{Performance Analysis}
\label{Sec:Performance}
 We study the following quantum channel, which models a commonly used 100~km
 long optical fiber in \QKD experiments, to test the performance of this new
 key rate formula in realistic situation.
 The findings here are generic as the general trend and performance improvement
 are also found in other situations including using the same fiber of
 different lengths as well as other randomly generated quantum channels.
 The yield and error rate of that quantum channel is given by $Q_{\Bbasis,\mu}
 = (1+p_\text{ap}) d_\mu$ and $Q_{\Bbasis,\mu} E_{\Bbasis,\mu} = p_\text{dc} +
 e_\text{mis} [1-\exp(-\eta_\text{ch}\mu)] + p_\text{ap} d_\mu / 2$, where
 $d_\mu = 1-(1-2p_\text{dc}) \exp(-\eta_\text{sys}\mu)$.
 Here we fix after pulse probability $p_\text{ap} = 4\times 10^{-2}$, dark
 count probability $p_\text{dc} = 6\times 10^{-7}$, error rate of the optical
 system $e_\text{mis} = 5\times 10^{-3}$.
 In addition, the transmittance of the system $\eta_\text{sys} = 0.1
 \eta_\text{ch}$, and the transmittance of the fiber is given by
 $\eta_\text{ch} = 10^{-0.2 L /10}$ with $L$ is the length of the fiber in km.
 These parameters are obtained from optical fiber experiment on a 100~km long
 fiber in Ref.~\cite{WLGHZG12}; and have been used in
 Refs.~\cite{LCWXZ14,Chau18} to study the performance of decoy-state \QKD in
 the \FK situation.
 We also follow Refs.~\cite{LCWXZ14,Chau18} by using the following security
 parameters:
 $\epsilon_\text{cor} = \kappa = 10^{-15}$, where $\epsilon_\text{sec} =
 \kappa \ell_\text{final}$ with $\ell_\text{final} \approx R s_\Xbasis /
 (p_\Xbasis^2 \langle Q_{\Xbasis,\mu} \rangle)$ is the length of the final
 key measured in bits.
 Note that $\kappa$ can be interpreted as the secrecy leakage per final secret
 bit.

\begin{table}[t]
 \centering
 \begin{tabular}{||c|c|c|c|c|c|c|c|c|c|c|c|c|c|c|c|c|c|c|c|c||}
  \hline\hline
   & \multicolumn{5}{c|}{$k=3$} & \multicolumn{5}{c|}{$k=4$} &
   \multicolumn{5}{c|}{$k=5$} & \multicolumn{5}{c||}{$k=6$} \\
  \cline{2-21}
  $s_\Xbasis$ &
  $R^\text{O}_{-5}$ & $R^\text{A}_{-5}$ & $R^\text{B}_{-5}$ &
   $R^\text{C}_{-5}$ & $R^\text{D}_{-5}$ &
  $R^\text{O}_{-5}$ & $R^\text{A}_{-5}$ & $R^\text{B}_{-5}$ &
   $R^\text{C}_{-5}$ & $R^\text{D}_{-5}$ &
  $R^\text{O}_{-5}$ & $R^\text{A}_{-5}$ & $R^\text{B}_{-5}$ &
   $R^\text{C}_{-5}$ & $R^\text{D}_{-5}$ &
  $R^\text{O}_{-5}$ & $R^\text{A}_{-5}$ & $R^\text{B}_{-5}$ &
   $R^\text{C}_{-5}$ & $R^\text{D}_{-5}$ \\
  \hline
  $10^5$ &
  $0.052$ & $0.300$ & $0.326$ & $0.470$ & $0.254$ &
  $0.027$ & $0.270$ & $0.291$ & $0.487$ & $0.747$ &
  $0.000$ & $0.152$ & $0.156$ & $0.160$ & $0.142$ &
  $0.000$ & $0.052$ & $0.076$ & $0.076$ & $0.000$ \\
  \hline
  $10^6$ &
  $0.294$ & $0.743$ & $0.789$ & $0.835$ & $0.637$ &
  $0.194$ & $0.727$ & $0.763$ & $0.829$ & $1.41$ &
  $0.100$ & $0.660$ & $0.694$ & $0.516$ & $0.484$ &
  $0.055$ & $0.404$ & $0.434$ & $0.407$ & $0.966$ \\
  \hline
  $10^7$ &
  $0.687$ & $1.18$ & $1.23$ & $1.22$ & $1.04$ &
  $0.573$ & $1.27$ & $1.30$ & $1.27$ & $1.84$ &
  $0.421$ & $1.21$ & $1.20$ & $1.20$ & $1.12$ &
  $0.259$ & $0.949$ & $1.01$ & $0.823$ & $1.57$ \\
  \hline
  $10^8$ &
  $1.11$ & $1.43$ & $1.48$ & $1.45$ & $1.64$ &
  $1.04$ & $1.60$ & $1.63$ & $1.59$ & $2.18$ &
  $0.929$ & $1.66$ & $1.68$ & $1.63$ & $1.81$ &
  $0.624$ & $1.32$ & $1.34$ & $1.33$ & $2.00$ \\
  \hline
  $10^9$ &
  $1.51$ & $1.70$ & $1.75$ & $1.72$ & $2.05$ &
  $1.57$ & $1.91$ & $1.94$ & $1.90$ & $2.40$ &
  $1.46$ & $2.04$ & $2.10$ & $2.06$ & $2.37$ &
  $1.08$ & $1.74$ & $1.75$ & $1.71$ & $2.38$ \\
  \hline
  $10^{10}$ &
  $1.87$ & $1.98$ & $2.02$ & $1.99$ & $2.32$ &
  $1.97$ & $2.20$ & $2.22$ & $2.19$ & $2.58$ &
  $1.94$ & $2.40$ & $2.42$ & $2.40$ & $2.72$ &
  $1.72$ & $2.16$ & $2.18$ & $2.14$ & $2.63$ \\
  \hline
  $10^{11}$ &
  $2.20$ & $2.25$ & $2.29$ & $2.26$ & $2.43$ &
  $2.32$ & $2.46$ & $2.48$ & $2.45$ & $2.81$ &
  $2.46$ & $2.67$ & $2.69$ & $2.69$ & $2.88$ &
  $2.18$ & $2.50$ & $2.52$ & $2.48$ & $2.86$ \\
  \hline\hline
 \end{tabular}
 \caption{Comparison between the state-of-the-art key rate $R^\text{O} \equiv
  R^\text{O}_{-5} \times 10^{-5}$ in Ref.~\cite{Chau18} with the key rates in
  Eq.~\eqref{E:finite-size_key_rate} (or more precisely $R^\text{I}_{-5}
  \equiv \max (0,R^\text{I} \times 10^{-5})$) for the dedicated quantum
  channel used in Refs.~\cite{LCWXZ14,Chau18} via Method~I.
  These rate are optimized using the method stated in the main text.
 \label{T:keyrates}}
\end{table}

 Table~\ref{T:keyrates} compares the optimized key rates for the
 state-of-the-art method reported recently Eq.~(3) of Ref.~\cite{Chau18} with
 Eq.~\eqref{E:finite-size_key_rate} for various $s_\Xbasis$ and $k$.
 (This is the best provably secure key rate obtained before the posting of
 the original proposal using McDiarmid inequality by one of us in
 Ref.~\cite{earlier}.)
 The optimized rates are found by fixing the minimum photon intensity to
 $1\times 10^{-6}$, while maximizing over $p_\Xbasis$ as well as all other
 photon intensities $\mu_n$'s and all the $p_{\mu_n}$'s.
 This optimization is done by Monte Carlo method plus simulated annealing with
 a sample size of at least $10^{10}$ for each data entry in
 Table~\ref{T:keyrates}.
 For Method~\ref{Method:aggressive}, the optimized key rate depends on the
 actual $\Zbasis$-basis measurement results.
 Here we simply fix $n^{(i)}$'s to their expectation values.

 The table clearly shows that using McDiarmid inequality improves the
 optimized key rates in almost all cases.
 It also shows that for any method used, the provably secure key rate
 increases as the raw key length $s_\Xbasis$ increases.
 And they all gradually converge to the same infinite-size key rate.
 Besides, the asymptotic key rate generally increases with $k$.
 These are natural as longer $s_\Xbasis$ implies smaller finite-size
 statistical fluctuation and larger number of decoys $k$ used allows better
 estimation of the bounds of various $Y_{\Bbasis,m}$'s and $Y_{\Bbasis,1}
 e_{\Bbasis,1}$'s.

 Among the four methods introduced here, Method~\ref{Method:conventional}
 almost always gives the least provably secure key rate.
 This implies that it is more effective to estimate a lower bound for
 $Y_{\Zbasis,1}$ via estimating an upper bound for $Y_{\Zbasis,1}
 e_{\Zbasis,1}$ plus a lower bound for $Y_{\Zbasis,1} \bar{e}_{\Zbasis,1}$.
 Method~\ref{Method:Ybare} is slightly better than Method~\ref{Method:direct}
 for large $s_\Xbasis$ (say when $\gtrsim 10^8$, the improvement is about a
 few percent).
 Method~\ref{Method:aggressive} is about 5-15\% or so better than
 Method~\ref{Method:direct} when $10^8 \lesssim s_\Xbasis \lesssim 10^{11}$.
 This is not unexpected for the following reason.
 Although Method~\ref{Method:aggressive} is more aggressive than
 Method~\ref{Method:direct} in estimating the statistical fluctuation of
 $e_{\Zbasis,1}$ and hence the key rate, it requires an additional condition
 for lower-bounding $\langle w\rangle$.
 Thus the value of $\chi$ for Method~\ref{Method:aggressive} is 1 greater than
 that of Method~\ref{Method:direct}.
 As a result, for small raw key length, the improvement in estimating
 $e_{\Zbasis,1}$ for Method~\ref{Method:aggressive} may not be able to
 compensate the need to control the statistical fluctuation of one more
 variable.
 Table~\ref{T:keyrates} also depicts that Method~\ref{Method:aggressive} is
 about $5-20\%$ better than Method~\ref{Method:Ybare} when $10^8 \lesssim
 s_\Xbasis \lesssim 10^{11}$.
 Furthermore, for fixed $s_\Zbasis$ and $\kappa$ and a fixed method to compute
 bound for $e_{\Zbasis,1}$, the provably secure key rate reaches a maximum at
 a finite $k$.
 This is not unexpected because even though the $\chi$ we deduce is
 independent of the number photon intensities $k$ used, $\Width({\mathcal W})$
 diverges as $k\to +\infty$.
 Last but not least, in the case of $k=4$, Method~\ref{Method:aggressive}
 always gives the best key rate.
 We do not have a good answer to this observation.
 It is instructive to study why in future.

\section{Summary And Outlook}
\label{Sec:Summary}
 To summarize, for $s_\Xbasis \approx 10^5 - 10^6$, at least one of the four
 methods reported here could produce a provably secure key rate that is at
 least twice that of the state-of-the-art method.
 And for $s_\Xbasis \approx 10^8$, Method~\ref{Method:aggressive} is at least
 40\% better than the state-of-the-art method.
 These improvements are of great value in practical \QKD because the
 computational and time costs for classical post-processing can be quite high
 when the raw key length $s_\Xbasis$ is long.
 More importantly, the McDiarmid inequality method reported here is effective
 to increase the key rate of real or close to real time on demand generation
 of the secret key --- an application that is possible in near future with the
 advancement of laser technology.
 It is instructive to extend our McDiarmid inequality method to handle the
 case of \FK decoy-state measurement-device-independent \QKD and compare it
 with existing methods in literature, such as the one that uses the Chernoff
 bound~\cite{ZYW16} and its extension specifically for decoys with four
 different intensities~\cite{MZZZZW18}.

 In addition to \QKD, powerful concentration inequalities in statistics such
 as McDiarmid inequality could also be used beyond straightforward statistical
 data analysis.
 One possibility is to use it to construct model independent test for physics
 experiments that involve a large number of parameters but with relatively
 few data points.

\begin{acknowledgments}
 This work is supported by the RGC grants 17304716 and 17302019 of the Hong
 Kong SAR Government.
 We would like to thank K.-B. Luk for his discussion on potential applications
 of McDiarmid inequality in physics.
\end{acknowledgments}

\bibliographystyle{apsrev4-1}

\bibliography{mcdiamid.bib}

\end{document}